\newcommand{\C}{\mathcal{C}}
\newcommand{\F}{\mathbb{F}}
\newcommand{\Z}{\mathbb{Z}}
\newcommand{\sh}{\mathcal{S}}
\newcommand{\LL}{\mathcal{L}}
\newcommand{\MM}{\mathcal{M}}
\newcommand{\A}{\mathcal{A}}
\newcommand{\lw}[1]{\mathrm{w}_{\mathcal{L}}(#1)}
\newcommand{\mw}[1]{\mathrm{w}_{\mathcal{M}}(#1)}
\newcommand{\gw}[1]{\mathrm{w}_{\mu}(#1)}
\begin{document}

\title{MWS and FWS Codes for Coordinate-Wise Weight Functions \thanks{The first author acknowledges the support of the NSERC of Canada Discovery Grant program.}
}
%\subtitle{Do you have a subtitle?\\ If so, write it here}

%\titlerunning{Short form of title}        % if too long for running head

\author{Tim Alderson        \and
        Benjamin Morine %etc.
}

%\authorrunning{Short form of author list} % if too long for running head

\institute{T. Alderson \at
              University of New Brunswick \\
              Saint John, NB\\
              Canada
              Tel.: +1506-648-5622\\
              \email{Tim@unb.ca}           %  \\
%             \emph{Present address:} of F. Author  %  if needed
           \and
           B. Morine \at
              University of New Brunswick\\
              Saint John, NB\\
              Canada
}

\date{Received: date / Accepted: date}
% The correct dates will be entered by the editor

\maketitle

\begin{abstract}
A combinatorial problem concerning the maximum size of the ( Hamming) weight set of an $[n,k]_q$ linear code was recently introduced. Codes attaining the established upper bound are the Maximum Weight Spectrum (MWS) codes. Those $[n,k]_q $ codes with the same weight set as $ \F_q^n $ are called Full Weight Spectrum (FWS) codes. FWS codes are necessarily ``short", whereas MWS codes are necessarily ``long". For fixed $ k,q $ the values of $ n $ for which an $ [n,k]_q $-FWS code exists are completely determined, but the determination of the minimum length $ M(H,k,q) $ of an $ [n,k]_q $-MWS code remains an open problem. The current work broadens discussion first to general coordinate-wise weight functions, and then specifically to the Lee weight and a Manhattan like weight. In the general case we provide bounds on $ n $ for which an FWS code exists, and bounds on $ n $ for which an MWS code exists. When specializing to the Lee  or to the Manhattan setting we are able to completely determine the parameters of FWS codes. As with the  Hamming case, we are able to provide an upper bound on  $ M(\mathcal{L},k,q) $ (the minimum length of Lee MWS codes), and pose the determination of $ M(\mathcal{L},k,q) $ as an open problem. On the other hand, with respect to the Manhattan weight we completely determine the parameters of MWS codes.
\keywords{linear codes\and Lee weight \and  Manhattan weight \and maximum weight spectrum \and full weight spectrum}
% \PACS{PACS code1 \and PACS code2 \and more}
\subclass{94B05 \and 05A99}
\end{abstract}

\section{Introduction}
\label{intro}
In 1973 Delsarte studied the number of distinct distances held by a code $\C$, which in the case of linear codes equates to studying the number of distinct weights \cite{Delsarte1973}. Discussions on the set of distinct weights of a code  can be traced in \cite{MacWilliams1963}, where the author tackled the following question. Given a set of positive integers, $ S $,  is it possible to construct a code whose set of non-zero weights is $ S $? Partial solutions were presented, and  necessary conditions were established. 

In  \cite{Shi2019}  Shi \textit{et. al.} consider a combinatorial problem concerning the maximum number $ L(k,q) $ of distinct non-zero (Hamming) weights  a  linear code over $ \F_q $ with dimension $k$ may possess. Specifically they propose   \begin{equation}
	L(k,q)  \le \frac{q^k-1}{q-1}. \label{eqn: Sole bound}
\end{equation} 
In \cite{Alderson2019}, linear codes meeting this bound were shown to exist for all $ k $ and $ q $, and such codes were named \textit{maximum weight spectrum} (MWS) codes.  

A further refinement was also investigated in  \cite{Shi2019} by the introduction of the function $ L(n, k, q) $, denoting the maximum number of non-zero (Hamming) weights a linear $ [n, k]_q $ code may have. Since an arbitrary vector of length $ n $ has   Hamming weight at most $ n $, an immediate upper bound is  $ L(n, k, q)\le n $. In \cite{MR4014640}, the parameters of linear codes meeting this bound were completely determined, and were named \textit{full weight spectrum} (FWS) codes. 

The bound (\ref{eqn: Sole bound}) was shown to be sharp for binary codes by Haily and Harzalla \cite{Haily2015}. In \cite{Shi2019}, the binary case is proved independently from \cite{Haily2015} and the bound is further shown to be sharp for all $ q $-ary linear codes of dimension $ k=2 $.  The authors went on to conjecture that the bound is sharp for all $ q $ and $ k $. This conjecture was proved correct in \cite{Alderson2019}, where it is shown that linear MWS codes exist for all $ q $ and $ k $. For fixed $ k $ and $ q $ only sufficient conditions on $ n $ are known  for which an MWS code exists  \cite{Alderson2019,Cohen2018,Meneghetti2020} . The questions relating to maximal weight spectra are explored in \cite{Shi2019a} as they relate to cyclic codes, and in \cite{Shi2020} as they relate to quasi-cyclic codes. In both cases existence results are established. The problem of determining the minimum value of $ n $ for which an $ [n,k]_q $ MWS code exists has attracted some attention \cite{Alderson2019,Cohen2018,Meneghetti2020} but remains open.

In this work we expand discussions to linear codes equipped with general coordinate-wise weight functions, with a view towards  Lee weight and a Manhattan type weight. With linear codes, all scalar multiples of any codeword also belongs to the code. Non-zero scalar multiples of a codeword have the same Hamming weight, but in general have differing Lee weights and differing Manhattan weights. Hence, under these new weight functions we might expect bounds regarding respective weight spectra to differ from those established under the Hamming metric. 

In the sequel we establish necessary and sufficient conditions on $ n $ under which FWS codes with respect to these weight functions exist. We are also able to establish the existence of MWS codes for each $ k $ and $ q $, and for the Manhattan weight we are able to determine the minimal length $ n $ for which these MWS codes exist. On the other hand, determining minimal length of an MWS code under the Lee metric remains as elusive as it is in the case of the Hamming metric.  

The research on codes in the Manhattan metric is not extensive. This metric is popular in psychology and related areas but not well known in communications where such channels appear when a signal limiter is used at the receiver \cite{Gabidulin2012}. The literature on codes in the Lee metric is comparatively richer, though not entirely extensive, e.g., \cite{Astola1982a,MR0342262,Etzion2013,Golomb1970,Orlitsky1993,Roth1994,Satyanarayana1979}. Lee-codes were first introduced by C. Y. Lee in \cite{MR0122630}, and were found to be advantageous for certain types of data transmission. For example, the input alphabet may correspond to phases of a sinusoidal signal of fixed amplitude and frequency, to which white Guassian noise is added. Such a channel has an inherent ordering of the letters of the input alphabet, and transitions between adjacent letters are much more likely than transitions between distant input letters. Corrupted digits of phase-modulated signals are more likely to have only a slightly different phase in comparison to the original signal.  For such channels, the Lee metric is a much more useful notion than the Hamming metric   \cite{Berlekamp2015,Berlekamp1969}.

The properties of Lee-codes and, in particular, the question of existence (or nonexistence) of perfect codes in the Lee metric has been studied by numerous authors, for example,  \cite{Araujo2014,Astola1987,Etzion2011,Etzion2010,Golomb1970,Horak2009,Post1975}.   More recent interest in Lee codes has been spurred on due to further development of applications for these codes. Some examples include constrained and partial-response channels \cite{Roth1994}, interleaving schemes  \cite{Blaum1998}, orthogonal frequency-division multiplexing \cite{Schmidt2007}, multidimensional burst-error-correction \cite{Etzion2009},  error-correction in the rank modulation scheme for flash memories \cite{Jiang2010}, and VLSI decoders and fault-tolerant logic \cite{Astola2017,Roth1994}. There have also been more recent attempts to settle the existence question of perfect codes in the Lee metric \cite{Horak2009,Horak2018,Qureshi2020}.

\section{Preliminaries}
\label{sec:1}
Let $ \A $ be an alphabet of size $ q $ which without loss of generality is the set $ \{0,1,\ldots,q-1\} $. An $ (n,M)_q $-code, $ \C $ is  a subset of $ \A^n $ with $ |\C|=M $. The elements of $ \A^n $ are called words or vectors, while elements of $ \C $ are called codewords.  In the case that   $\A= \F_q$ (the finite field with $q$ elements) and $ \C $ is a subspace of dimension $ k $, $ \C $ is a \textit{linear code}, denoted an $ [n,k]_q $-code. We only consider nondegenerate linear codes, which is to say linear codes having no coordinate which is identically zero.  The \textit{Hamming distance}, $ d_H(a,b) $, between two words is the number of coordinates in which they differ.   A \emph{generator matrix} $G$ for an $[n,k]_q$-code  $\C$ is a $k\times n$ matrix over $\F_q$ whose row vectors generate $\C$. The support of a word $ c $ is the set  $ \textrm{supp}(c)=\{i\mid c_i\ne 0\} $.

Throughout we shall adopt standard notation regarding vector spaces. In particular if $ S $ is a set of vectors then $ \langle S\rangle $ shall denote the span of $ S $, by $ e_i $ we shall denote the standard basis vector with $ i $'th entry $ 1 $ and all other entries $ 0 $, and given $ n $  $ k $-vectors $ \lambda_1,\lambda_2,\ldots,\lambda_n $ we shall denote by $ G=[\lambda_1^{\alpha_1}|\lambda_2^{\alpha_2}|,\cdots,|\lambda_n^{\alpha_n}] $ the $ k\times (\alpha_1+\alpha_2+\cdots+\alpha_n) $ matrix obtained by repeating the $ i $'th column $ \alpha_i $ times, $1\le i \le n$.   

%\subsection{Subsection title}

\subsection{Component-wise Metrics and Weight Functions, MWS and FWS Codes}
\label{subsec:2}
For our purposes a weight function on $ \F_q^n $ is a mapping $ w_\mu:\F_q^n\to {\mathbb R}^+ $ with $ \gw{x} =0 $ if and only if $ x=0 $;  $ \gw{u} $ is referred 
to as the $ \mu $-weight of $ u $. A weight function $ w_\mu$ on $\F_q$ may be extended in a natural way (and with a slight abuse of notation) to a weight function on $\F_q^n$ for each $n\ge 1$  by taking $ \gw{u}= \gw{u_1} + \gw{u_2}+\cdots + \gw{u_n} $. In such a case we say that $ w_\mu $ is a \textit{component-wise weight function}. For any subset $ \C\subseteq \F_q^n $ the $ \mu- $\emph{weight set of $\C$} is defined as  $$ \gw{\C}=\left\{ \gw{c}   \mid c \in \C\setminus \{0\}\right\}.$$

For $ u\in \F_q^n $, the Hamming weight of $ u $ is given by $ w_H(u)=d_H(u,0) $,  the (Hamming) distance between $ u $  and $ 0 $. Likewise, given any metric on $ \F_q^n $ one has, \textit{mutatis mutandis} an associated weight function.

If $ d_\mu :\A^n\times \A^n\to \Z^+$ is a metric then we denote by $ d_\mu (u,v) $ the $ \mu $-distance between $ u $ and $ v $, and similarly $  \gw{u} $ denotes the $ \mu $-weight (distance from $ 0 $) of $ u $.  By $ S_\mu(u,t) $ we denote the $ \mu $-sphere of radius $ t $ around $ u $, and by $ \sh_\mu(u,t) $ it's boundary, that is
\[
S_\mu(u,t) = \{ x \mid d_\mu(u,x) \le t\},\text{ and } \sh_\mu(u,t) = \{ x \mid d_\mu(u,x) = t\}.  
\] 
We say $ d_\mu $  is a component-wise metric if $ d_\mu(u,v) = \sum\limits_{i=1}^nd_\mu(u_i,v_i)$, and note that $ d_\mu $ then corresponds to a component-wise weight function $ w_\mu $. The Hamming metric is an example of component-wise metric, as are both the Lee and Manhattan metrics.\\

Analogous to the Hamming case, for fixed $ k,q $ we denote by $ L_\mu(k,q) $ the maximum cardinality of $  \gw{\C}  $ as $ \C $ ranges over all $ q $-ary linear codes of dimension $ k $. If $ n $ is also fixed then this quantity is denoted by $ L_\mu(n,k,q) $. Since an $ [n,k]_q $ code $ \C $ is a subset of $ \F_q^n $ it follows that $ |w_\mu(\C)|\le |w_\mu(\F_q^n)| $ motivating the following definition.   

\begin{definition}\label{defn: FWS codes general}
	If $ \C $ is a linear $ [n,k]_q $-code then $ \C $	is a $ \mu $-Full Weight Spectrum ($ \mu $-FWS) code if $ w_\mu(\C)= w_\mu(\F_q^n) $.
\end{definition}
Note that $ |w_H(\F_q^n)| =n  $, so the definition above subsumes that of FWS codes with respect to Hamming weights. Clearly, if $ n=k $ then $ \C $ is necessarily FWS, so the task at hand is that of determining the maximum length $ n $ of a $ k $-dimensional $ q $-ary $ \mu $-FWS code.  The following inequality also follows from the respective definitions:
\begin{equation}\label{eqn: compare L(kq) and L(nkq)}
	L_\mu(n,k,q)\le L_\mu(k,q).
\end{equation} 
An open problem posed in  \cite{Shi2019} is that of determining the sharpness of (\ref{eqn: compare L(kq) and L(nkq)}) for the Hamming weight; to determine parameters for which the inequality is strict. This is also of interest when widening discussions to other types of weight functions. \\

For a component-wise weight function on $ \F_q^n$, a further parameter shall prove useful. 

\begin{definition}\label{defn: delta}
If $ w_\mu $ is a component-wise weight function  on $\F_q^n$ then we define the parameter $\Delta_\mu$ by   \begin{equation}
	\Delta_\mu =\max\limits_{n\ge k} \{|w_\mu( \langle u \rangle)|\;\mid u\in \F_q^n\}.	
\end{equation}    
For notational convenience, if $w_\mu$  is clear from the context then we write $\Delta_\mu=\Delta$.
\end{definition}

The following is easily established.

\begin{proposition}\label{prop: Maximum Weight Spectrum and Full Weight Spectrum Codes}
	Let $ w_\mu $ be a component-wise weight function  on $ \A^n$ where $ |\A|=q $, and let   $ m=\max w_\mu(\A)  $.
	\begin{enumerate}
		\item  $ |w_\mu(\A^n)| \le n\cdot m $. \label{propP1: Maximum Weight Spectrum and Full Weight Spectrum Codes}
		\item If $ w_\mu(\A)=\{1,2,\ldots,m\} $ then $ |w_\mu(\A^n)| = n\cdot m $.\label{propP2: Maximum Weight Spectrum and Full Weight Spectrum Codes}
		\item If $ w_\mu(\A)=\{1,2,\ldots,m\} $ then an $ (n,M)_q $ code  $ \C $ over $ \A $ is $ \mu $-FWS if and only if $ |w_\mu(\C)| = n\cdot m $.\label{propP2b: Maximum Weight Spectrum and Full Weight Spectrum Codes}
		\item If $ \A=\F_q $  then    $ L_\mu(k,q) \le \frac{q^k-1}{q-1}\cdot \Delta $ .   
	\end{enumerate}  
\end{proposition}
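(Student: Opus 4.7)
The plan is to prove the four items in order, since each is elementary and the later parts reuse the earlier ones.

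For (\ref{propP1: Maximum Weight Spectrum and Full Weight Spectrum Codes}), I would observe that, because $w_\mu$ is component-wise, every $v\in\A^n$ satisfies $w_\mu(v)=\sum_{i=1}^n w_\mu(v_i)$, where each summand lies in $\{0\}\cup w_\mu(\A\setminus\{0\})\subseteq\{0,1,\ldots,m\}$. A non-zero $v$ has at least one coordinate $v_i\neq 0$, so $w_\mu(v)\ge 1$. Hence the weights of non-zero vectors lie in $\{1,2,\ldots,nm\}$, giving $|w_\mu(\A^n)|\le nm$.

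For (\ref{propP2: Maximum Weight Spectrum and Full Weight Spectrum Codes}), assuming $w_\mu(\A)=\{1,2,\ldots,m\}$, I need to realise every integer $w\in\{1,\ldots,nm\}$ as a weight. I would write $w=am+r$ with $0\le r<m$ and $0\le a\le n$ (choosing $a=n,\,r=0$ when $w=nm$, and otherwise $a<n$); then a vector with $a$ coordinates of individual weight $m$, one coordinate of weight $r$ (if $r>0$), and the remainder $0$, uses at most $n$ coordinates and has $w_\mu$-weight exactly $w$. Combined with (\ref{propP1: Maximum Weight Spectrum and Full Weight Spectrum Codes}) this yields $|w_\mu(\A^n)|=nm$. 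Part (\ref{propP2b: Maximum Weight Spectrum and Full Weight Spectrum Codes}) is then immediate: since $\C\subseteq\A^n$ we have $w_\mu(\C)\subseteq w_\mu(\A^n)$, and two finite sets with the same cardinality and one containing the other must coincide, so Definition \ref{defn: FWS codes general} is equivalent to $|w_\mu(\C)|=nm$.

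For (\ref{propP1: Maximum Weight Spectrum and Full Weight Spectrum Codes})--style step in part $4$, given any $[n,k]_q$ code $\C$, the $q^k-1$ non-zero codewords partition into $(q^k-1)/(q-1)$ one-dimensional subspaces (punctured projective points) of the form $\langle u\rangle\setminus\{0\}$. The contribution of each such subspace to $w_\mu(\C)$ has cardinality at most $\Delta$ by definition of $\Delta$, so by the union bound $|w_\mu(\C)|\le \tfrac{q^k-1}{q-1}\cdot\Delta$. Since this is uniform in $n$, taking the supremum gives $L_\mu(k,q)\le\tfrac{q^k-1}{q-1}\cdot\Delta$.

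There is no real obstacle: the only mildly delicate point is the implicit integrality of the component weights that underlies part (\ref{propP1: Maximum Weight Spectrum and Full Weight Spectrum Codes}) (without it, weights could take non-integer real values and the bound $nm$ would fail), and being careful in (\ref{propP2: Maximum Weight Spectrum and Full Weight Spectrum Codes}) that the construction uses at most $n$ coordinates in the edge cases $w=nm$ and $r=0$.
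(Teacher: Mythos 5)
Your proof is correct and supplies exactly the routine counting arguments that the paper leaves implicit (it states this proposition as ``easily established'' and gives no proof). Your side remark about the implicit integrality of the component weights is apt --- part (\ref{propP1: Maximum Weight Spectrum and Full Weight Spectrum Codes}) does require $w_\mu(\A)\subseteq\{1,\ldots,m\}$ rather than arbitrary positive reals, a hypothesis the paper only makes explicit in its later definition of component-wise metrics --- but this is a gap in the paper's statement, not in your argument.
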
 

We may now generalize the notion of $MWS$ to the more general setting of component-wise weight function.

\begin{definition}\label{defn: linear MWS codes general}
	Let $ w_\mu $ be a component-wise weight function  on $ \F_q^n$.  A linear $ [n,k]_q $-code $ \C $	is a $ \mu $-Maximum Weight Spectrum ($ \mu $-MWS) code if   \[|w_\mu(\C)|= \frac{q^k-1}{q-1}\cdot \Delta .\]
\end{definition}

\begin{remark} A few observations regarding the Proposition \ref{prop: Maximum Weight Spectrum and Full Weight Spectrum Codes} and Definition \ref{defn: linear MWS codes general}:
	\begin{enumerate}\item If $ w_\mu(\A)=\{1,2,\ldots,m\} $ then we may characterize $ \mu $-FWS codes as those having at least one codeword of each $ \mu $-weight from $ 1 $ to $ mn $.
		\item We may characterize linear $ \mu $-MWS codes  as those linear codes in which every one dimensional subspace has a $ \mu $-weight set of size $ \Delta $, and the $ \mu $-weight sets of one dimensional subspaces are mutually disjoint.     
		\item  For the Hamming weight, $ m=1=\Delta $, and as such the terminology is compatible with that of \cite{Shi2019}.	
	\end{enumerate}
\end{remark}

%In all of the above notations we may drop the subscript $ \mu $ if the metric/weight of concern is clear from the context. 
In the sequel we will find the following to be of use.

\begin{lemma}\label{lem: bounds on length in general case} 
	Let $ \C $ be an $  [n,k ]_q $-code, and let $ w_\mu $ be a component-wise weight function defined on $ \F_q^n $ with $ w_\mu(\F_q)=\{1,2,\ldots,m\} $.
	\begin{enumerate}
		\item  If   $ \C $ is $ \mu $-FWS then $ n\cdot m\le L_\mu(k,q) $. \label{Part 2: lemma on bounds in general case}
		\item  If   $ \C $ is both $ \mu $-FWS and $ \mu $-MWS, then $ n=\frac{q^k-1}{q-1}\cdot \frac{\Delta}{m}$. \label{Part 4: lemma on bounds in general case}
	\end{enumerate} 	
\end{lemma}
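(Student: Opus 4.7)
The plan is to derive both parts by directly chaining together the definitions of $\mu$-FWS and $\mu$-MWS with the counting identity $|w_\mu(\F_q^n)|=n\cdot m$ established in Proposition \ref{prop: Maximum Weight Spectrum and Full Weight Spectrum Codes}. There is essentially nothing to compute beyond invoking the hypothesis $w_\mu(\F_q)=\{1,2,\ldots,m\}$, which puts us in the setting of part (\ref{propP2: Maximum Weight Spectrum and Full Weight Spectrum Codes}) of that proposition.

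For part (\ref{Part 2: lemma on bounds in general case}), I would argue as follows. Since $\C$ is $\mu$-FWS, by Definition \ref{defn: FWS codes general} we have $w_\mu(\C) = w_\mu(\F_q^n)$, and in particular $|w_\mu(\C)| = |w_\mu(\F_q^n)|$. By Proposition \ref{prop: Maximum Weight Spectrum and Full Weight Spectrum Codes}(\ref{propP2: Maximum Weight Spectrum and Full Weight Spectrum Codes}), the hypothesis $w_\mu(\F_q)=\{1,2,\ldots,m\}$ yields $|w_\mu(\F_q^n)| = n\cdot m$, so $|w_\mu(\C)| = n\cdot m$. Since $L_\mu(k,q)$ is, by its definition, the maximum of $|w_\mu(\C')|$ over all $k$-dimensional $q$-ary linear codes $\C'$, and $\C$ is itself such a code, we conclude $n\cdot m \le L_\mu(k,q)$.

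For part (\ref{Part 4: lemma on bounds in general case}), the $\mu$-FWS hypothesis again gives $|w_\mu(\C)| = n\cdot m$ via Proposition \ref{prop: Maximum Weight Spectrum and Full Weight Spectrum Codes}(\ref{propP2b: Maximum Weight Spectrum and Full Weight Spectrum Codes}), and the $\mu$-MWS hypothesis gives, by Definition \ref{defn: linear MWS codes general}, $|w_\mu(\C)| = \frac{q^k-1}{q-1}\cdot \Delta$. Equating these two expressions and solving for $n$ yields $n = \frac{q^k-1}{q-1}\cdot \frac{\Delta}{m}$, as desired.

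There is no real obstacle here; the lemma is a bookkeeping consequence of the preceding definitions and of Proposition \ref{prop: Maximum Weight Spectrum and Full Weight Spectrum Codes}. The only subtlety worth flagging is that both parts quietly rely on the normalization $w_\mu(\F_q)=\{1,2,\ldots,m\}$ (rather than an arbitrary subset of $[1,m]$), since this is exactly what makes $|w_\mu(\F_q^n)|$ attain the maximum $n\cdot m$ in Proposition \ref{prop: Maximum Weight Spectrum and Full Weight Spectrum Codes}(\ref{propP1: Maximum Weight Spectrum and Full Weight Spectrum Codes}); without this assumption, $|w_\mu(\C)|=nm$ would not follow from $\mu$-FWS and the argument for both parts would break down.
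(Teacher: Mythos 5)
Your proposal is correct and follows essentially the same route as the paper: both parts reduce to the identity $|w_\mu(\C)| = n\cdot m$ coming from the FWS hypothesis together with Proposition \ref{prop: Maximum Weight Spectrum and Full Weight Spectrum Codes}, combined with the definition of $L_\mu(k,q)$ for part (\ref{Part 2: lemma on bounds in general case}) and with the MWS count $|w_\mu(\C)|=\frac{q^k-1}{q-1}\cdot\Delta$ for part (\ref{Part 4: lemma on bounds in general case}). Your remark about the role of the normalization $w_\mu(\F_q)=\{1,2,\ldots,m\}$ is a fair observation that the paper leaves implicit.
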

\begin{proof}
	The first part follows from the Eq. (\ref{eqn: compare L(kq) and L(nkq)}) and Proposition \ref{prop: Maximum Weight Spectrum and Full Weight Spectrum Codes}. For the second part, observe that the FWS property gives $ n=\frac{|w_\mu(\C)|}{m}$, whereas the MWS property gives $ |w_\mu(\C)|=\frac{q^k-1}{q-1}\cdot \Delta $.  
\end{proof}

\begin{lemma}\label{lem: more bounds on length in general case} 
	Let $ \C $ be an $ (n,M)_q $-code containing the zero codeword, let $ w_\mu $ be a component-wise weight function,  and let $ r = \max\limits_{\delta}  |\{u\in \C\mid w_\mu(u)=\delta\}| $.   It holds that \[\frac{M-1}{r}\le |w_\mu(\C)| \le  m\cdot n,\] with equality throughout if and only if $ |\{u\in \C\mid w_\mu(u)=\delta\}|= r $ for all $ 1\le \delta\le mn $. 
\end{lemma}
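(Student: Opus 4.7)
The plan is to handle the two inequalities separately and then package the equality condition as a simple double-counting argument. For the upper bound, I would invoke Proposition \ref{prop: Maximum Weight Spectrum and Full Weight Spectrum Codes}(\ref{propP1: Maximum Weight Spectrum and Full Weight Spectrum Codes}) directly: every codeword lies in $\F_q^n$, so $w_\mu(\C)\subseteq w_\mu(\F_q^n)$, and the latter has cardinality at most $n\cdot m$. (Because $w_\mu(\F_q)=\{1,2,\ldots,m\}$ we in fact know $|w_\mu(\F_q^n)|=nm$, but for the stated bound the inequality is all we need.)

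For the lower bound, I would partition the nonzero codewords of $\C$ by their $\mu$-weight. Writing $N_\delta := |\{u\in \C : w_\mu(u)=\delta\}|$, we have
\[
M-1 \;=\; \sum_{\delta\in w_\mu(\C)} N_\delta \;\le\; r\cdot|w_\mu(\C)|,
\]
since each $N_\delta\le r$ by definition of $r$. Rearranging gives $\frac{M-1}{r}\le|w_\mu(\C)|$, as required.

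For the equality characterization, suppose first that $N_\delta=r$ for every $\delta$ with $1\le\delta\le mn$. Then in particular every such $\delta$ lies in $w_\mu(\C)$, forcing $|w_\mu(\C)|=mn$ (upper bound equality), and summing gives $M-1 = r\cdot mn = r\cdot|w_\mu(\C)|$ (lower bound equality). Conversely, if the full chain of equalities holds, then $|w_\mu(\C)|=mn$ forces $w_\mu(\C)=\{1,2,\ldots,mn\}$, and from $M-1 = r\cdot|w_\mu(\C)| = \sum_\delta N_\delta$ with each $N_\delta\le r$, any strict inequality $N_{\delta_0}<r$ would make the sum strictly smaller than $r\cdot|w_\mu(\C)|$; hence $N_\delta=r$ for all $\delta$ in the stated range.

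I do not expect any real obstacle here: the lemma is essentially a pigeonhole observation and the equality clause is the standard ``average equals maximum implies all equal'' argument. The only subtlety to keep in mind is to use the hypothesis $w_\mu(\F_q)=\{1,2,\ldots,m\}$ (inherited implicitly through $\Delta$ and $m$ from the referenced proposition) when asserting that the equality case is phrased over the full range $1\le\delta\le mn$ rather than only over $\delta\in w_\mu(\C)$.
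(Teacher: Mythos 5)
Your proposal is correct and follows essentially the same route as the paper: the upper bound is quoted from Proposition \ref{prop: Maximum Weight Spectrum and Full Weight Spectrum Codes}, the lower bound comes from counting nonzero codewords grouped by weight (the paper phrases this as double-counting pairs $(u,w_\mu(u))$ to get $M-1\le |w_\mu(\C)|\cdot r$), and the equality clause is the same ``all weight classes have size $r$'' counting argument that the paper dismisses as ``simple counting.'' Your write-up merely supplies the details the paper omits, including the (correct) observation that the range $1\le\delta\le mn$ in the equality condition implicitly relies on $w_\mu(\F_q)=\{1,2,\ldots,m\}$.
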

\begin{proof}
	The first inequality arises by counting ordered pairs $ (u,w(u)) $ where $ 0\ne u\in \C $ in two ways to give
	$ M-1\le |w_\mu(\C)| \cdot r $, and the second inequality is from Proposition \ref{prop: Maximum Weight Spectrum and Full Weight Spectrum Codes}.  	The equality part follows from simple counting.   
\end{proof}

Given the lower and upper bounds inherent in the above it behooves us to investigate optimality with respect to code length. To this end we define $ M(\mu, k, q) $ and $ N(\mu, k, q) $ as follows:
\begin{equation}
	M(\mu, k, q) = \min \{n \mid C \text{ is an } [n,k]_q \;\; \mu-MWS \text{ code } \}
\end{equation}  
\begin{equation}
	N(\mu, k, q) = \max \{n \mid C \text{ is an } [n,k]_q \;\; \mu-FWS \text{ code } \}
\end{equation}  

Tables 1 - \ref{tab:3} will perhaps place some of the results that follow in context.\\
\begin{table}
\parbox{.5\linewidth}{
% For tables use
%\begin{table}
	% table caption is above the table
	\caption{Bounds: Hamming Weight}
	\label{tab:1}       % Give a unique label
	% For LaTeX tables use
	\begin{tabular}{ll}
		\hline\noalign{\smallskip}
		Hamming Weight	&  Reference \\
		\noalign{\smallskip}\hline\noalign{\smallskip}
		$ M(H,k,2)= 2^k-1 $ 	&   \cite{MR4014640,Shi2019} \\
	\hline
	$ M(H,3,3)\le 32 $	& \cite{Alderson2019} \\
	\hline
	$  \frac{q}{2}\cdot  \frac{q^k-1}{q-1} \le  M(H,k,q) $   	& \cite{MR4014640}\\\hline
	$    M(H,k,q)\le \frac{q}{2}\cdot  \frac{q^k-1}{q-1}\cdot   \frac{q^{k-1}-1}{q-1} $   	& \cite{Alderson2019}\\
	\hline
	$ 	M(H,2,q)=\frac{q(q+1)}{2} $ & \cite{MR4014640} \\
	\hline
	$ N(H,k,q) = 2^k-1$	& \cite{MR4014640} \\
		\noalign{\smallskip}\hline
	\end{tabular}
%\end{table}
}\hfill
\parbox{.45\linewidth}{
%\begin{table}
	% table caption is above the table
	\caption{Bounds: Manhattan Weight}
	\label{tab:2}       % Give a unique label
	% For LaTeX tables use
	\begin{tabular}{ll}
		\hline\noalign{\smallskip}
		Manhattan Weight	&  Reference \\
		\noalign{\smallskip}\hline\noalign{\smallskip}
			$  M(\MM,k,q)= \frac{ q^k-1}{q-1} $   	& Thm. \ref{thm: MMWS bound determined} \\
		\hline
		$ N(\MM,k,q)= \frac{ q^k-1}{q-1} $	& Thm. \ref{thm: ub length of MFWS}  \\
		\noalign{\smallskip}\hline
	\end{tabular}
}
\end{table}
\begin{table}
	% table caption is above the table
	\caption{Bounds: Lee Weight}
	\label{tab:3}       % Give a unique label
	% For LaTeX tables use
	\begin{tabular}{ll}
		\hline\noalign{\smallskip}
			Lee Weight	&  Reference \\
		\noalign{\smallskip}\hline\noalign{\smallskip}
		$   \frac{q^k-1}{q-1} +\left\lceil \frac{2(k-1)}{q-1}\right\rceil \le  M(\LL,k,q>2) $ 	& Lem. \ref{lem: lb length of LMWS}\\ 
	\hline
	$   2^k-1  \le  M(\LL,k,2) $   	& Lem. \ref{lem: lb length of LMWS} \\
	\hline
	$ M(\LL, k, q)\le \frac{ (\frac{q+1}{2})^{2k^2-3k+1}-1}{\frac{q+1}{2}-1} $ &  Thm. \ref{thm: Lee MWS bound determined}\\ 
	\hline
	$ M(\LL,2,7)\le 16 $	& \cite{BenMorineMScThesis2022} \\
	\hline
	$ M(\LL,2,11)\le 34 $   	&  \cite{BenMorineMScThesis2022}\\
	\hline
	$ M(\LL,2,13)\le 46 $ & \cite{BenMorineMScThesis2022} \\
	\hline
	$ M(\LL,2,17)\le 76 $	&  \cite{BenMorineMScThesis2022}\\
	\hline
	$ M(\LL,2,19)\le 86 $	& \cite{BenMorineMScThesis2022} \\
	\hline
	$ M(\LL,2,23)\le 126 $	&\cite{BenMorineMScThesis2022}  \\
	\hline
	$ N(\LL,k,q)= \frac{\left(\frac{q+1}{2}\right)^k-1}{\left(\frac{q+1}{2}\right)-1} $	& Lem. \ref{lem: LFWS necessary and sufficient}  \\
		\noalign{\smallskip}\hline
	\end{tabular}
\end{table}

In the next section we focus on determining $ N(\mu, k, q) $.

\section{Generalized Full Weight Spectrum Codes}
\label{sec: GFWS}
We can in fact determine $ N(\mu, k, q) $ when $ w_\mu(\F_q)=\{ 1, 2, \ldots,m\} $. To do this, the following will be of use.

\begin{lemma}\label{lem: General FWS basis property}
	Suppose $ w_\mu $ is a coordinate-wise weight function on $ \F_q^n $, where $ \F_q=\{ 0, \alpha_1=1,\alpha_2,\ldots,\alpha_{q-1}\} $, and $ w_\mu(\F_q)=\{ 1, 2, \ldots,m\} $.\\ 
	Let $ \C $ be a (linear) $ [n,k]_q $ code with basis $B=\{v_1,v_2,\ldots,v_k\}$. Let $ S= \textrm{supp}(v_1) $, $ \displaystyle T=\cup_{i=2}^{k} \textrm{supp}(v_i) $,  $ S'=S\setminus T $, with $ s=|S|, t=|T| $, and $ s'=|S'| $.\\ If $ C $ is $ \mu $-FWS then
	\[
	s'\le m\cdot t +1.
	\]  	
\end{lemma}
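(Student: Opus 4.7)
The plan is to exploit the FWS hypothesis at a single carefully-chosen target weight. By Proposition~\ref{prop: Maximum Weight Spectrum and Full Weight Spectrum Codes} together with the FWS definition, $w_\mu(\C) = w_\mu(\F_q^n) = \{1,2,\ldots,mn\}$, so in particular some codeword $c \in \C$ realizes $w_\mu(c) = mt + 1$. Expanding $c = \lambda v_1 + u$ with $u \in \langle v_2,\ldots,v_k\rangle$, I would note that $supp(u) \subseteq T$ and that any coordinate of $c$ outside $S' \cup T$ must vanish (coordinates outside $S \cup T$ are identically zero in $\C$, which the paper excludes, while $S \cap T$ is already contained in $T$). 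Hence one has the clean decomposition $w_\mu(c) = w_\mu(c|_{S'}) + w_\mu(c|_T)$, with the trivial bound $w_\mu(c|_T) \le mt$ coming from $|T|=t$ and $w_\mu(\F_q) = \{1,\ldots,m\}$.

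Next, I would rule out $\lambda = 0$: if $\lambda = 0$ then $c = u$ is supported on $T$, forcing $w_\mu(c) \le mt$ and contradicting $w_\mu(c) = mt + 1$. So $\lambda \ne 0$, and on $S'$ we have $c|_{S'} = \lambda v_1|_{S'}$ because $u$ vanishes there. Since each entry $(v_1)_j$ for $j \in S'$ is nonzero (as $S' \subseteq supp(v_1)$), the scaled entry $\lambda (v_1)_j$ is also nonzero and so contributes $w_\mu(\lambda (v_1)_j) \ge 1$. Summing over the $s'$ coordinates of $S'$ gives $w_\mu(c|_{S'}) \ge s'$, and combining with the decomposition above yields
\[
s' \;\le\; w_\mu(c|_{S'}) + w_\mu(c|_T) \;=\; w_\mu(c) \;=\; mt + 1,
\]
which is the desired inequality.

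There is no substantive obstacle in the argument; the entire proof hinges on the choice of target weight. The value $mt + 1$ is precisely the smallest integer that cannot be realized inside the subcode $\langle v_2,\ldots,v_k\rangle$ (whose codewords all have weight at most $mt$), so a codeword of this weight is forced to carry a nontrivial contribution from the $S'$ coordinates, and that contribution is bounded below by $s'$ via the componentwise lower bound $w_\mu(\alpha) \ge 1$ for every nonzero $\alpha \in \F_q$.
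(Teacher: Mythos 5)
Your proof is correct and takes essentially the same route as the paper's: both single out the target weight $mt+1$ and observe that any codeword attaining it must have all of $S'$ in its support (equivalently, $\lambda\ne 0$ in the decomposition $c=\lambda v_1+u$), which forces $s'\le mt+1$. The only point worth adding is to dispose of the trivial case $s'=0$ separately and, when $s'\ge 1$, to note the one-line check that $mt+1\le mn=m(t+s')$, so that the FWS hypothesis genuinely supplies a codeword of weight $mt+1$.
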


%\newpage 

\begin{proof}
	If $ s'=0 $ then the result holds, so assume $ s'\ge 1$.
	Let $ x\in \C $. If $ \textrm{supp}(x)\cap S'=\emptyset $ then $ \textrm{supp}(x)\subseteq T $, so $ |\textrm{supp}(x)|\le t $ giving   
	\begin{equation}\label{eqn:basis general FWS 1} 
		w_\mu(x)\le m\cdot t.
	\end{equation} 	
	If $ \textrm{supp}(x)\cap S'\ne \emptyset $ then it follows that  $ S' \subseteq \textrm{supp}(x)   $, so $ |\textrm{supp}(x)|\ge s' $ giving   
	\begin{equation}\label{eqn:basis general FWS 2} 
		w_\mu(x)\ge  s'.
	\end{equation}  
	Since $ n=t+s' $ (by assumption no coordinate is identically zero) we see that if  $ \C $ is $ \mu $-FWS then 
	\[
	w_\mu(\C) = \left\{1,2,\ldots,(t+s')\cdot m \right\},
	\]
	and since $ m,s'\ge 1  $ there exists a codeword $ u $ with $ w_\mu(u)= t \cdot m +1$. From  (\ref{eqn:basis general FWS 1}) and (\ref{eqn:basis general FWS 2}) we may conclude that
	\[
	w_\mu(u)= t \cdot m+1 \ge  s'
	\]
	giving the desired inequality.   
\end{proof}

\begin{theorem}\label{thm: ub2 length of General FWS}
	Let $ m $ be as in Lemma \ref{lem: General FWS basis property}. If $ \C $ is a linear $ [n,k]_q $ $ \mu $-FWS code then 
	\[
	n\le \sum\limits_{i=0}^{k-1} \left(m+1\right)^i =\frac{(m+1)^k-1}{m}.
	\]
\end{theorem}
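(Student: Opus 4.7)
The plan is to prove the bound by induction on $k$, using Lemma~\ref{lem: General FWS basis property} as the engine in the inductive step. For the base case $k=1$, a $1$-dimensional $\mu$-FWS code $\C = \langle v \rangle$ of length $n$ has every entry of $v$ nonzero (since no coordinate of $\C$ is identically zero), so for each nonzero $\alpha \in \F_q$,
\[
w_\mu(\alpha v) = \sum_{i=1}^{n} w_\mu(\alpha v_i) \ge n,
\]
each summand being at least $1$. The FWS hypothesis demands a codeword of weight $1$, so $n = 1 = \tfrac{(m+1)-1}{m}$, as needed.

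For the inductive step, assume the bound holds for every $(k-1)$-dimensional $\mu$-FWS code, and let $\C$ be an $[n,k]_q$ $\mu$-FWS code. I would choose a basis $\{v_1,v_2,\ldots,v_k\}$ of $\C$ with the property that the subcode $\C' = \langle v_2,\ldots,v_k \rangle$, viewed on its support $T$ of size $t$, is itself a $[t,k-1]_q$ $\mu$-FWS code. Granted such a basis, the inductive hypothesis yields $t \le \tfrac{(m+1)^{k-1}-1}{m}$, while Lemma~\ref{lem: General FWS basis property} gives $s' \le mt+1$. Combining these, and using the recurrence $\tfrac{(m+1)^k-1}{m} = (m+1)\cdot\tfrac{(m+1)^{k-1}-1}{m} + 1$, we obtain
\[
n = s' + t \le (m+1)t + 1 \le (m+1)\cdot\frac{(m+1)^{k-1}-1}{m} + 1 = \frac{(m+1)^k-1}{m},
\]
which is the desired bound.

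The main obstacle is exhibiting a basis of the required kind. A natural candidate is to take $\C'$ to be a $(k-1)$-dimensional subcode of $\C$ of \emph{minimum} support, and to pick $v_1$ as any element of $\C \setminus \C'$ (completing to a basis via any basis of $\C'$). From the dichotomy in the proof of Lemma~\ref{lem: General FWS basis property}, every codeword outside $\C'$ has support containing $S' = \{1,\ldots,n\}\setminus T$ and so carries weight at least $s'$; consequently $\{1,2,\ldots,s'-1\} \subseteq w_\mu(\C')$ comes for free. The delicate point is to verify that the remaining weights $\{s',s'+1,\ldots,mt\}$ are likewise realized in $\C'$, so that $\C'|_T$ is fully $\mu$-FWS. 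One natural route is to identify $\C'$ with the kernel subcode $\{c \in \C : c_i = 0 \text{ for all } i \in J\}$, where $J$ is the largest class of pairwise parallel columns of a generator matrix of $\C$: this choice minimises $t$ and forces the FWS structure of $\C$ to descend to $\C'|_T$, completing the induction.
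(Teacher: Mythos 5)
Your base case and the arithmetic of the inductive step are fine, but the induction rests on a claim you have not established and that your argument genuinely needs: the existence of a $(k-1)$-dimensional subcode $\C'$ whose restriction to its support $T$ is itself $\mu$-FWS, i.e.\ realizes \emph{every} weight in $\{1,2,\ldots,mt\}$. As you note, the dichotomy from Lemma~\ref{lem: General FWS basis property} only hands you the weights $\{1,\ldots,s'-1\}$ inside $\C'$ (every codeword of weight below $s'$ must vanish on $S'$ and hence lie in $\C'$), and $s'-1$ can be far smaller than $mt$; codewords of the weights $s',\ldots,mt$ may all involve $v_1$ nontrivially and live outside $\C'$. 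The proposed repair --- taking $\C'$ to be the kernel subcode of the largest parallel class of columns --- does minimize $t$ among hyperplanes of $\C$, but minimizing the support size of $\C'$ gives no control over which weights its codewords attain, and you offer no argument that the FWS property ``descends'' to $\C'|_T$. Until that is proved (and it is not obviously true), the inductive hypothesis cannot be invoked.

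The gap is also avoidable, because the inductive hypothesis you are reaching for is stronger than what the bound requires: you only need to control the \emph{support size} of the partial span, not its full weight spectrum. The paper's proof builds the basis greedily using the FWS property of the full code $\C$ at every stage. Having chosen independent $\lambda_1,\ldots,\lambda_j$ with $t_j=\bigl|\bigcup_{i\le j}supp(\lambda_i)\bigr|$, the FWS hypothesis on $\C$ supplies a codeword $u$ with $w_\mu(u)=m\,t_j+1$; since every codeword supported inside $\bigcup_{i\le j}supp(\lambda_i)$ has weight at most $m\,t_j$, this $u$ lies outside $\langle\lambda_1,\ldots,\lambda_j\rangle$, and since each nonzero coordinate contributes weight at least $1$ it satisfies $|supp(u)|\le m\,t_j+1$. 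Taking $\lambda_{j+1}=u$ gives $t_{j+1}\le (m+1)t_j+1$, and iterating from $t_1=1$ yields $n=t_k\le\sum_{i=0}^{k-1}(m+1)^i$. No subcode is ever required to be FWS --- only $\C$ is, and it is queried afresh at each step. I would rework your inductive step along these lines rather than trying to certify an FWS hyperplane.
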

\begin{proof}
	Let $ \C $ be an $ [n,k]_q $ $ \mu $-FWS code. It suffices to show that we may select a basis $ \{\lambda_1,\lambda_2,\ldots,\lambda_k\} $ of $ \C $ such that
	\begin{equation}\label{eqn: slecting bases general FWS}
		\left|\bigcup_{i=1}^t \textrm{supp}(\lambda_i) \right|\le 1 + (m+1)+\cdots +(m+1)^{t-1}, \text{ for } 1\le t \le k. 
	\end{equation}
	To this end first note that since $ \C $ is $ \mu $-FWS, there is a codeword of weight $ 1 $. Assume without loss of generality that $ w_\mu(1)=1 $, so  we may take $ \lambda_1=e_1 $, so (\ref{eqn: slecting bases general FWS}) holds for $ t=1 $. Assume $ \lambda_1,\lambda_2,\ldots,\lambda_{k-1} $ have been selected in accordance with (\ref{eqn: slecting bases general FWS}), and let $ \lambda_k $ be a codeword such that $ \{\lambda_1,\lambda_2,\ldots,\lambda_k\} $ is a basis for $ \C $.  From the Lemma \ref{lem: General FWS basis property} we have
	\begin{align*}
		\left|\bigcup_{i=1}^k \textrm{supp}(\lambda_i) \right|& \le \left|\bigcup_{i=1}^{k-1}\textrm{supp}(\lambda_i)\right|+ m\cdot \left|\bigcup_{i=1}^{k-1}\textrm{supp}(\lambda_i)\right|+1\\
		& = (m+1)\cdot \left|\bigcup_{i=1}^{k-1}\textrm{supp}(\lambda_i)\right|+1\\
		& \le \sum\limits_{i=0}^{k-1} \left(m+1\right)^i.
	\end{align*}
\end{proof}

We now show the bound in Theorem \ref{thm: ub2 length of General FWS} is sharp.

\begin{theorem}\label{thm: General FWS necessary and sufficient}
	Let $ m $ be as in Lemma \ref{lem: General FWS basis property}. There exists an $ [n,k]_q $ $ \mu $-FWS code if and only if   \[k\le n \le \frac{(m+1)^k-1}{m}.\]
	In particular $ N(\mu,k,q)= \frac{(m+1)^k-1}{m}.$
\end{theorem}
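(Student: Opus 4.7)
The upper bound $n \le \frac{(m+1)^k-1}{m}$ is Theorem~\ref{thm: ub2 length of General FWS}, and $n \ge k$ is forced by $\dim \C = k$. What remains is to exhibit a $\mu$-FWS $[n,k]_q$ code for every integer $n$ in the interval $[k, n_k]$, where $n_k := \frac{(m+1)^k-1}{m}$. My plan is induction on $k$, with a one-step extension that appends a block of repeated entries to a known FWS code of smaller dimension.

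The base case $k=1$ is immediate, since $n_1 = 1$ and $\C = \F_q$ generated by $(1)$ has $w_\mu(\C) = \{1,\dots,m\} = w_\mu(\F_q^1)$. For the inductive step, given $n \in [k, n_k]$ I set $n' := \min(n-1, n_{k-1})$ and $s := n - n'$; a short calculation using $n_k = (m+1)n_{k-1}+1$ together with the easy bound $n_{k-1} \ge k-1$ verifies $n' \in [k-1, n_{k-1}]$ and $s \in [1, mn'+1]$. The induction hypothesis supplies a $[n', k-1]_q$ $\mu$-FWS code $\C'$ with generator $G'$, and I form the block-diagonal generator
\[
G = \begin{pmatrix} G' & \mathbf{0} \\ \mathbf{0} & \mathbf{1}_s^T \end{pmatrix},
\]
where $\mathbf{1}_s^T$ is the row of $s$ ones. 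By construction $G$ has rank $k$, length $n' + s = n$, and no identically-zero column.

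A generic codeword of $\C := \langle G\rangle$ has the form $(x \mid c, c, \dots, c)$ with $x \in \C'$ and $c \in \F_q$, hence $\mu$-weight $w_\mu(x) + s\cdot w_\mu(c)$. Since $w_\mu(\C') = \{0,1,\dots,mn'\}$ by the FWS hypothesis on $\C'$, and $w_\mu(\F_q) = \{0,1,\dots,m\}$ by assumption, the weight set of $\C$ is the sumset
\[
\{a + sb : 0 \le a \le mn',\ 0 \le b \le m\}.
\]
The crux is to identify this sumset with the full integer interval $\{0, 1, \dots, m(n'+s)\} = w_\mu(\F_q^n)$: for each fixed $b$ the slice is $[sb, sb+mn']$, and the inequality $s \le mn'+1$ is precisely what forces the $b$-th and $(b{+}1)$-th slices to overlap, so their union is the contiguous interval $[0, m(n'+s)]$. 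This gives the FWS property.

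The only design choice that really matters is the extension vector, and taking the all-ones row is natural because $w_\mu$ on $\F_q$ already realizes every value in $\{0,1,\dots,m\}$, so scaling across $s$ coordinates produces the $b$-values needed to tile the target interval in jumps of size $s$. I do not anticipate a serious obstacle: the entire construction hinges on the one-line coverage inequality $s \le mn'+1$, which the inductive choice $n' = \min(n-1, n_{k-1})$ automatically secures.
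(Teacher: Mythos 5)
Your proof is correct. The necessity direction is handled identically (citing Theorem~\ref{thm: ub2 length of General FWS} and $n\ge k$), and your parameter checks go through: $n_k=(m+1)n_{k-1}+1$ gives exactly $s\le mn'+1$ in the extremal case, $n_{k-1}\ge k-1$ guarantees the inductive hypothesis applies, and the block-diagonal generator has full rank with no identically-zero coordinate. The sumset argument $\{a+sb : 0\le a\le mn',\ 0\le b\le m\}=\{0,1,\dots,m(n'+s)\}$ under the coverage inequality $s\le mn'+1$ is sound (only a cosmetic slip: $w_\mu(\C')$ as defined in the paper excludes $0$, but you correctly let $x$ range over all of $\C'$ including the zero codeword).

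Where you differ from the paper is in the organization of the sufficiency direction. The paper writes down the full-length generator $\left[ e_1\left|(e_2)^{m+1}\right|\cdots \left|(e_k)^{(m+1)^{k-1}}\right.\right]$ in one shot and reads off the weight of a codeword as a base-$(m+1)$ expansion with digits in $\{0,\dots,m\}$, which immediately tiles $\{1,\dots,mn\}$; intermediate lengths $k\le n<n_k$ are then obtained by deleting rightmost columns while preserving rank, with a rather terse verification that the truncated code remains FWS. Your induction on $k$ produces the same matrix at maximal length (there $s=mn'+1$ at every step, so the slices are exactly adjacent and you recover the base-$(m+1)$ digits), but it replaces the column-deletion step with a single clean inequality: the $b$-th and $(b+1)$-th slices $[sb,sb+mn']$ overlap precisely when $s\le mn'+1$, and the choice $n'=\min(n-1,n_{k-1})$ secures this automatically for every admissible $n$. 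This buys a uniform, fully rigorous treatment of all intermediate lengths at the cost of making the extremal construction implicit rather than explicit; the paper's version displays the extremal code concretely but leaves more verification to the reader in the truncated cases.
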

\begin{proof}
	That $ k\le n \le \frac{(m+1)^k-1}{m} $ is necessary is given in Theorem \ref{thm: ub2 length of General FWS}.	
	For sufficiency, let $ k $ be fixed and let $ G $ be the following matrix:
	\begin{equation*}
		G=\left[ \begin{pmatrix}
			1\\0\\0\\ \vdots \\ \vdots \\ 0 
		\end{pmatrix}  
		\begin{pmatrix}
			0\\1\\0\\ \vdots \\ \vdots \\ 0 
		\end{pmatrix} 
		\begin{pmatrix}
			0\\1\\0\\ \vdots \\ \vdots \\ 0 
		\end{pmatrix}
		\cdots
		\begin{pmatrix}
			0\\1\\0\\ \vdots \\ \vdots \\ 0 
		\end{pmatrix}
		\begin{pmatrix}
			0\\0\\1\\ \vdots \\ \vdots \\ 0 
		\end{pmatrix}
		\begin{pmatrix}
			0\\0\\1\\ \vdots \\ \vdots \\ 0 
		\end{pmatrix} \cdots
		\begin{pmatrix}
			0\\0\\1\\ \vdots \\ \vdots \\ 0 
		\end{pmatrix}
		\cdots \cdots
		\begin{pmatrix}
			0\\0\\0\\ \vdots \\ \vdots \\ 1 
		\end{pmatrix}
		\right]
	\end{equation*} 
	\[
	=\left[ e_1\left|(e_2)^{ m+1 }\left|(e_3)^{( m+1 )^2}\right|\cdots \right|(e_k)^{( m+1 )^{k-1}}\right]
	\]
	Denote the rows of $ G $ by $ \lambda_1,\lambda_2,\ldots \lambda_k $ and let $ \C $ be the code generated by $ G $. The code $\C$ has length $n= \frac{(m+1)^k-1}{m} $. Moreover, for any $ u=(\alpha_1,\alpha_2,\ldots,\alpha_k) \in \F_q^k $,  

	\begin{align*}
		w_{\mu}(uG) =w_{\mu}(\alpha_1\lambda_1+\cdots+\alpha_k\lambda_k) & = \gw{\alpha_1}+\gw{\alpha_2}\left( m+1 \right) +\cdots+\gw{\alpha_k}\left( m+1 \right)^{k-1}\\ & =\sum_{i=1}^{k}\gw{\alpha_i}\left( m+1 \right)^{i-1}.
	\end{align*}  
	Since $ \gw{\F_q}=\{1,2,\ldots,m\} $ we obtain  $ |w_{\mu}(\C)|=nm (=(m+1)^k-1)  $, so $ \C $ is an FWS.\\
	More generally, for $k\le n \le \frac{(m+1)^k-1}{m}$ consider the $k\times n$ generator matrix $ G' $ obtained by  removing rightmost columns of $ G $ while preserving rank, say
	\[
	G'= \left.\left.\left.\left[ e_1\left|(e_2)^{ m+1 }\left|%(e_3)^{( m+1 )^2}\right|
	\cdots \right|(e_{t})^{( m+1 )^{t-1}}\left|(e_{t+1})^{\alpha}\right| e_{t+2}\right|e_{t+3}\right|\cdots \right|e_{k} \right. \right],
	\]
	where $1\le \alpha \le (m+1)^{t}$. 	Let $ \C' $ denote the code with generator $ G' $. 
	
	Let $\gamma$ be an arbitrary $\mu-$weight with $1\le \gamma \le mn$.  If $\gamma \le m(k-t-1)$ then $\gamma = \gw{ uG'}$ for any $u = (u_1,u_2,\ldots,u_k)$ with final $k-t-1$ coordinates chosen such that the sum of their $\mu-$weights is $\gamma$, the remaining coordinates being $0$. Now consider $\gamma > m(k-t-1)$ and let $a = \left\lfloor \frac{\gamma - m(k-t-1)}{\alpha}\right\rfloor$. If $a\le m$ then $b=\gamma - a\alpha -m(k-t-1)\le (m+1)^t-1$, so we may expand $b$ in base $m+1$:
	\[
	b=b_1+b_2(m+1)+b_3(m+1)^2+\cdots +b_t(m+1)^{t-1}.
	\]    
	Taking $u\in \F_q^k$ with $\gw{u_i}=b_i$, $1\le i \le t$, $\gw{u_{t+1}}=a$, and $\gw{u_i}= m$ for $i>t+1$ we obtain $\gw{uG'} = \gamma$. Likewise, if $a>m$ we may take $u\in \F_q^k$ with $\gw{u_i}=b_i$, $1\le i \le t$, and $\gw{u_i}= m$ for $i>t$ to obtain a codeword of weight $\gamma$.
\end{proof}

\begin{corollary} \label{cor: FWS and MWS}
	Let $ m $ be as in Proposition \ref{prop: Maximum Weight Spectrum and Full Weight Spectrum Codes}. If $ \C $ is an $ [n,k]_q $ code that is both $ \mu $-MWS and $ \mu $-FWS, then $ \Delta=m=q-1 $, and in particular $ n=\frac{q^k-1}{q-1} $ 
\end{corollary}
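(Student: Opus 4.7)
The plan is to combine three earlier results: the existence of a weight-$1$ codeword (forced by FWS), the MWS characterization of one-dimensional subspaces, and the FWS length upper bound from Theorem \ref{thm: General FWS necessary and sufficient}. The conclusions $\Delta=m$, $n=\frac{q^k-1}{q-1}$, and $m=q-1$ are handled in turn.

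First I would establish $\Delta=m$ by analyzing a weight-$1$ codeword. Under the standing assumption $w_\mu(\F_q)=\{1,2,\ldots,m\}$, Proposition \ref{prop: Maximum Weight Spectrum and Full Weight Spectrum Codes} gives $|w_\mu(\F_q^n)|=nm$, and since weights are positive integers this forces $w_\mu(\F_q^n)=\{1,2,\ldots,nm\}$. Because $\C$ is $\mu$-FWS there is a codeword $c$ of weight $1$; since each nonzero coordinate contributes at least $1$, $c$ must have a single nonzero coordinate, say $c=\alpha e_i$ with $w_\mu(\alpha)=1$. As $\beta$ ranges over $\F_q^*$, $\beta\alpha$ also ranges over $\F_q^*$, so $w_\mu(\langle c\rangle)=w_\mu(\F_q^*)=\{1,2,\ldots,m\}$, of size exactly $m$. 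By the MWS characterization in the remark following Definition \ref{defn: linear MWS codes general}, every one-dimensional subspace of a $\mu$-MWS code has weight set of size $\Delta$; applied to $\langle c\rangle$ this yields $\Delta=m$.

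Next, substituting $\Delta=m$ into the second assertion of Lemma \ref{lem: bounds on length in general case} gives $n=\frac{q^k-1}{q-1}\cdot\frac{\Delta}{m}=\frac{q^k-1}{q-1}$, which is the ``in particular'' clause of the corollary. To finish I would invoke Theorem \ref{thm: General FWS necessary and sufficient}, which says any $[n,k]_q$ $\mu$-FWS code satisfies $n\le \frac{(m+1)^k-1}{m}$. Combining this with the equality just derived produces $\frac{q^k-1}{q-1}\le \frac{(m+1)^k-1}{m}$, i.e.\ $f(q)\le f(m+1)$ for $f(x)=1+x+\cdots+x^{k-1}$.

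The main obstacle is then recognizing that the trivial bound $m=|w_\mu(\F_q^*)|\le q-1$, i.e.\ $m+1\le q$, combined with the strict monotonicity of $f$ for $k\ge 2$, yields the reverse inequality $f(q)\ge f(m+1)$; both inequalities can hold simultaneously only when $m+1=q$, giving $m=q-1$ and hence $\Delta=m=q-1$. This coupling of the FWS length upper bound with the MWS+FWS length formula through a monotonicity argument on geometric sums is the essential non-routine step; everything else is a direct unpacking of the relevant definitions and earlier lemmas.
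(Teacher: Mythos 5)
Your proof is correct and, at its core, travels the same road as the paper's: both arguments combine the length formula $n=\frac{q^k-1}{q-1}\cdot\frac{\Delta}{m}$ from Lemma \ref{lem: bounds on length in general case} with the FWS length bound $n\le\frac{(m+1)^k-1}{m}$ of Theorem \ref{thm: General FWS necessary and sufficient}, and close the gap using $m\le q-1$ and the monotonicity of $1+x+\cdots+x^{k-1}$. The one genuine difference is how $\Delta=m$ is obtained: the paper first notes $m\le\Delta$ (since $w_\mu(\F_q)=w_\mu(\langle e_1\rangle)$) and then extracts $\Delta\le m$ from the collapsed inequality chain, whereas you prove $\Delta=m$ up front and independently, by locating a weight-$1$ codeword $c=\alpha e_i$ (forced by the FWS property) and noting that $\langle c\rangle$ has $\mu$-weight set exactly $\{1,\ldots,m\}$, which the MWS characterization of one-dimensional subspaces then pins at size $\Delta$. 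Your ordering is slightly more self-contained and makes the ``in particular'' clause immediate. One further point in your favour: you correctly flag that the strict monotonicity needed to conclude $m+1=q$ requires $k\ge 2$. The paper's proof silently has the same limitation---for $k=1$ the last inequality in its chain is an equality for every $m$, and the conclusion $m=q-1$ can genuinely fail there (e.g.\ the $[1,1]_5$ code under the Lee weight is both $\LL$-MWS and $\LL$-FWS with $m=2\ne 4$), so the statement should really carry the hypothesis $k\ge 2$ for the clause $m=q-1$.
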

\begin{proof}
	First observe that since $ w_\mu $ is a component-wise weight function, $ w_\mu(\F_q)=w_\mu(\left\langle e_1 \right\rangle) $, giving $ m\le \Delta $.  Next we observe that
	\begin{equation} \label{eqn: FWS and MWS}
		n=\frac{q^k-1}{q-1}\cdot \frac{\Delta}{m} \le  \frac{(m+1)^k-1}{m}\le \frac{q^k-1}{q-1}.
	\end{equation}
	The first equality follows from Lemma \ref{lem: bounds on length in general case} , the first inequality follows from Theorem \ref{thm: General FWS necessary and sufficient}, and the final inequality holds since $ m\le q-1 $. From (\ref{eqn: FWS and MWS}) it follows that $ \Delta \le m $ whence $ \Delta = m $ and the result follows.
\end{proof}

Corollary \ref{cor: FWS and MWS} tells us that if we seek codes that are both MWS and FWS then the particular weight function must have a rather strict structure.  Via a Manhattan type weight function we will exhibit infinite families of codes that are both MWS and FWS.

\section{Codes with the Lee and Manhattan Weight Functions}
\label{sec: LMWF}

In this section we specialize discussions to the Lee and Manhattan weight functions. We introduce these weight functions in the more general setting of $\Z_q^n$, where $\mathbb{Z}_{q}=\{0,1,\ldots,q-1\}$ denotes the ring of integer residues modulo the positive integer $q$. Linear codes are defined over fields, so when applying these weights to linear codes it is necessarily assumed that $q$ is prime.   

Roughly following the notation and terminology as established in \cite{1137784}, the Lee weight of an element $\alpha \in \mathbb{Z}_{q}$, denoted by $w_{\mathcal{L}}(\alpha)$ or $|\alpha|$, takes non-negative integer values and is defined by

\begin{equation}
	w_{\mathcal{L}}(\alpha)=|\alpha|= \begin{cases} \alpha  & \text { if } 0 \leq \alpha  \leq q / 2 \\ q- \alpha  & \text { otherwise. }\end{cases}
\end{equation}

We refer to the elements $1,2, \ldots,\lfloor q / 2\rfloor$ as the ``positive" elements of $\mathbb{Z}_{q}$ and denote the set of such elements $ \Z_q^+ $ ; the remaining elements in $\Z_{q} \setminus \{0\}$ compose the ``negative" elements, denoted $\Z_{q}^{-}$.

\begin{example}
	In $\Z_{7}$,  $\Z_{7}^{+}=\{1,2,3\}$ and  $ |2|=|5|=2 $, whereas in $\Z_{8}$, $\Z_{8}^{+}=\{1,2,3,4\}$ and $|2|=|6|=2$.	
\end{example}

For a word $ {c}=\left(c_{1}, c_{2}, \ldots, c_{n}\right)$ in $\mathbb{Z}_{q}^{n}$, we define the Lee weight of $  {c} $ by
$$
w_{\mathcal{L}}( {c})=\sum_{j=1}^{n}\left|c_{j}\right|.
$$

(with the summation being integer summation). The Lee distance (metric) between two words $ {x},  {y} \in \Z_{q}^{n}$ is defined as $ w_\mathcal{L}( {x}- {y})$; we denote that distance by $\mathrm{d}_{\mathcal{L}}( {x},  {y})$. We note that for $ q=2,3 $ the Hamming and Lee metrics coincide. 

In keeping with the notation introduced in the previous section we shall adopt the following:

\begin{equation}
	L_{\mathcal{L}}(k,q) :=\max \{|w_\mathcal{L}(\mathcal{C})|: \mathcal{C} \text { is linear, of dimension } k \text{ over } \Z_q \}, 
\end{equation} 
and 
\begin{equation}
	L_{\mathcal{L}}(n,k,q) :=\max \{|w_\mathcal{L}(\mathcal{C})|: \mathcal{C} \text { is an } [n,k]_q  \text{ code over }  \Z_q \}. 
\end{equation}

The minimum Lee distance of an $(n, M)$ code $\mathcal{C}$ over $\mathbb{Z}_{q}$ with $M>1$ is defined by
$$
\mathrm{d}_{\mathcal{L}}(\mathcal{C})=\min _{ {c}_{1},  {c}_{2} \in \mathcal{C}:  {c}_{1} \neq  {c}_{2}} \mathrm{~d}_{\mathcal{L}}\left( {c}_{1},  {c}_{2}\right) .
$$
We note that for a linear code, the minimum distance is also the minimum weight, that is    \[\mathrm{d}_{\mathcal{L}}(\mathcal{\C}) = \min w_{\LL}(\C).\]

	The Manhattan metric (or Taxicab metric), is defined for alphabet letters taken from the integers. For two words $ {{  x}=(x_{1},x_{2},\ldots, x_{n})}, {  y}=(y_{1},y_{2},\ldots, y_{n}) \in   \mathbb Z^n $ the Manhattan distance between $ {  x} $ and $ y $ is defined as  \begin{equation}
		d_{\mathcal{M}}({  x},{  y}) = \sum\limits_{i=1}^{n} |x_i-y_i|.
	\end{equation} The Manhattan weight of a vector in $ \mathbb Z^n $ is it's distance from zero. As we have chosen $\mathbb{Z}_{q}=\{0,1,\ldots,q-1\}$ we may adapt the Manhattan weight to $ \Z_q^n $ by treating coordinate entries as integers. Doing so we define the Manhattan weight of a vector $ x\in \Z_q^n $ as the integer sum: 
	\begin{equation}
		\mw{ {x}} =\sum_{i=1}^n x_i.
	\end{equation}

%	The Lee and Manhattan weights are defined to work over the ring $ \Z_q $. If the size of the alphabet is not prime then $ \Z_q $ is not a field. As our focus here shall be on linear codes we shall (unless otherwise stated) discuss only codes over fields $ \Z_q $, where $ q=p $ is prime. 

	\subsection{Lee MWS Codes}
\label{subsec: LMWS}	
	Let $ \C $  be a linear $ [n,k]_q $ code over $ \Z_q $. From the definition of Lee weight it follows that if $  {x} \in \Z_q^n  $ then $ \lw{ {x}} = \lw{ {-x}}$ (i.e. ,$ | {x}|=| {-x}| $). With reference to Definition \ref{defn: delta}, we therefore have $ \Delta \le \frac{q-1}{2} $. One may quickly verify that since $ q $ is prime,  $ |w_\LL(\langle e_1 \rangle)|=\frac{q-1}{2} $ whence $ \Delta=\frac{q-1}{2} $ giving the following.
	
	\begin{theorem}\label{thm: defn for MWS Lee}
		A linear $ [n,k]_q $ code is $ \LL $-MWS if and only if  $ w_\LL(\C) = \frac{q^k-1}{2} $.
	\end{theorem}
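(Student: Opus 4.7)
The statement is really just a specialization of Definition \ref{defn: linear MWS codes general} once the quantity $\Delta$ for the Lee weight has been pinned down. The plan is therefore to compute $\Delta$ exactly, and then substitute into the MWS definition.

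First I would establish the upper bound $\Delta \le \frac{q-1}{2}$. Fix any $u \in \F_q^n$ and consider the one-dimensional subspace $\langle u \rangle$, which has $q-1$ nonzero elements $\alpha u$ with $\alpha \in \F_q^*$. The central observation (already noted in the paragraph preceding the theorem) is that the Lee weight is symmetric: $\lw{x} = \lw{-x}$ for every $x \in \Z_q^n$, because each coordinate satisfies $|a| = |q-a| = |-a|$. Thus $\alpha u$ and $(q-\alpha)u = -\alpha u$ have the same Lee weight, so the $q-1$ nonzero scalar multiples of $u$ yield at most $(q-1)/2$ distinct Lee weights.

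Next I would show this bound is attained, so that $\Delta = \frac{q-1}{2}$ exactly. Take $u = e_1 \in \F_q^n$ (any $n \ge k \ge 1$ works). Then $\lw{\alpha e_1} = |\alpha|$ for $\alpha \in \F_q^*$, and as $\alpha$ ranges over $\F_q^*$ the Lee weight $|\alpha|$ ranges over $\{1, 2, \ldots, (q-1)/2\}$ by the very definition of the Lee weight on $\Z_q$. Hence $|w_\LL(\langle e_1\rangle)| = (q-1)/2$, witnessing $\Delta \ge (q-1)/2$.

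Combining the two bounds gives $\Delta = \frac{q-1}{2}$. Substituting into Definition \ref{defn: linear MWS codes general}, a linear $[n,k]_q$ code $\C$ is $\LL$-MWS if and only if
\[
|w_\LL(\C)| = \frac{q^k-1}{q-1} \cdot \frac{q-1}{2} = \frac{q^k-1}{2},
\]
which is exactly the claim. There is no real obstacle here — the only subtlety is making sure the symmetry argument is applied to a fixed $u$ (so that collisions only arise between the pairs $\alpha u$ and $-\alpha u$, and no further coincidences are being silently assumed) and that the witness $u = e_1$ saturates the bound, which relies on $q$ being prime so that $\F_q^* = \{1,2,\ldots,q-1\}$ and every positive Lee-weight value in $\{1,\ldots,(q-1)/2\}$ is actually realized.
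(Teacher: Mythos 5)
Your proposal is correct and follows essentially the same route as the paper: the symmetry $\lw{x}=\lw{-x}$ gives $\Delta\le\frac{q-1}{2}$, the witness $\langle e_1\rangle$ shows the bound is attained, and substituting $\Delta=\frac{q-1}{2}$ into Definition \ref{defn: linear MWS codes general} yields the stated characterization. The paper's argument is exactly this, stated more tersely in the paragraph preceding the theorem.
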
   
	
	From the definition it is clear that if $ \C $ is  $ \mathcal L $-MWS then two codewords $ u,v $ have the same $ \LL $-weight if and only if $ u=\pm v $.  Hence, in Lemma \ref{lem: more bounds on length in general case} we have $ m=\frac{q-1}{2}, r = 2, $ and $ M=q^k $ giving
	\begin{equation}\label{eqn: low lb on length of LMWS}
		\C   \text{ an } [n,k]_q \;\LL\text{-MWS code } \implies 	n\ge \frac{q^k-1}{q-1}.
	\end{equation}
	We can make a slight improvement to the lower bound (\ref{eqn: low lb on length of LMWS}).  
	\begin{lemma}\label{lem: disjoint support}
		If $ \C $ is an $ [ n,k ]_q $ $ \mathcal L $-MWS code, $ q>2 $, then any two non-zero codewords have at least one support position in common.
	\end{lemma}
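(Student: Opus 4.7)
The plan is to argue by contradiction. Suppose $u$ and $v$ are two non-zero codewords of $\C$ whose supports are disjoint. By linearity both $u+v$ and $u-v$ lie in $\C$, and each is non-zero: if $u+v=0$ then $v=-u$, giving $\mathrm{supp}(u)=\mathrm{supp}(v)$, contradicting disjointness, and similarly for $u-v$.

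The first key step is to compute Lee weights. Because the supports of $u$ and $v$ are disjoint, the Lee weight distributes coordinate-wise without cancellation, so $\lw{u+v}=\lw{u}+\lw{v}$. Using the identity $\lw{-x}=\lw{x}$ (valid coordinate-wise in $\Z_q$), the same reasoning gives $\lw{u-v}=\lw{u}+\lw{-v}=\lw{u}+\lw{v}$. Thus $u+v$ and $u-v$ are two non-zero codewords of equal Lee weight.

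The second key step is to extract from the MWS hypothesis that equal Lee weight forces codewords to be negatives of one another. Since $\C$ is $\LL$-MWS, Theorem \ref{thm: defn for MWS Lee} gives $|w_\LL(\C)|=\frac{q^k-1}{2}$. There are $q^k-1$ non-zero codewords, and the involution $x\mapsto -x$ (which is fixed-point free on non-zero vectors because $q$ is an odd prime) pairs them into pairs of equal Lee weight. The count $q^k-1 = 2\cdot \frac{q^k-1}{2}$ then shows, via Lemma \ref{lem: more bounds on length in general case} (with $r=2$), that each Lee weight is attained by \emph{exactly} two non-zero codewords, and these two must be negatives of each other.

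Applied to our situation, this gives $u+v=\pm(u-v)$. If $u+v=u-v$, then $2v=0$, and if $u+v=-(u-v)$, then $2u=0$; since $q>2$ is prime, both possibilities force $u=0$ or $v=0$, the desired contradiction. No single step is really the obstacle here — the proof reduces to combining the sign-symmetry $\lw{-x}=\lw{x}$ with the pairing structure inherent in the MWS equality; the mild subtlety is just being careful that $u+v$ and $u-v$ are non-zero so that the pairing argument applies.
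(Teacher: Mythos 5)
Your proof is correct and takes essentially the same approach as the paper's: disjoint supports give $\lw{u+v}=\lw{u}+\lw{v}=\lw{u-v}$, and the $\LL$-MWS property forces $u+v=\pm(u-v)$, whence $u=0$ or $v=0$ since $q$ is odd. The only difference is that you explicitly rederive, via the counting/pairing argument, the fact (stated in the paper just before the lemma as an immediate consequence of the definition) that in an $\LL$-MWS code two non-zero codewords share a Lee weight only if they are negatives of one another.
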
 
	\begin{proof}
		If $ u,v \in \C$ have disjoint supports then   \[\lw{u+v}=\lw{u}+ \lw{v}=\lw{u-v}=\lw{-u+v}=\lw{-u-v}.\]   Since $ q>2 $ we have $ 1\ne -1 $, and thus by the $ \LL $-MWS property we must have $ u=0 $ or $ v=0 $.
	\end{proof}

	\begin{corollary} \label{cor: supp at least k}
		If $ \C $ is an $  [n,k]_q $ $ \LL $-MWS code, $ q>2 $, and	$ 0\ne c\in \C $,  then $ |\textrm{supp}(c)|\ge k $.  In particular, $ d_{\LL}(\C)\ge k $.
	\end{corollary}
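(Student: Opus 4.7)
The plan is to derive the support-size bound by contradiction, using a simple dimension count in combination with Lemma \ref{lem: disjoint support}, and then to obtain the Lee-distance bound essentially for free from the fact that every nonzero coordinate contributes at least $1$ to the Lee weight.

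\textbf{Step 1 (setup by contradiction).} Suppose, for contradiction, that there is a nonzero codeword $c\in\C$ with $|supp(c)|\le k-1$. Set $T=supp(c)$ and $t=|T|\le k-1$. The goal is to produce a second nonzero codeword whose support is disjoint from $T$, so that Lemma \ref{lem: disjoint support} is violated.

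\textbf{Step 2 (find a disjoint-support companion).} Consider the projection $\pi_T\colon \F_q^n\to\F_q^{t}$ onto the coordinates indexed by $T$, and restrict it to $\C$. Since $\dim\C = k$ and the image of $\pi_T|_\C$ lies in a space of dimension $t\le k-1$, the rank-nullity theorem gives $\dim\ker(\pi_T|_\C)\ge k-t\ge 1$. Pick any nonzero $v\in\ker(\pi_T|_\C)$; then $v\ne 0$ and $supp(v)\cap T=\emptyset$. Now $c$ and $v$ are two nonzero codewords with disjoint supports, which contradicts Lemma \ref{lem: disjoint support} (where we use the hypothesis $q>2$). Hence $|supp(c)|\ge k$ for every nonzero $c\in\C$.

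\textbf{Step 3 (deducing the distance bound).} For any nonzero $c\in\C$,
\[
w_\LL(c) \;=\; \sum_{i=1}^n |c_i| \;=\; \sum_{i\in supp(c)} |c_i| \;\ge\; |supp(c)| \;\ge\; k,
\]
since each nonzero element of $\Z_q$ has Lee weight at least $1$. Because the minimum Lee distance of a linear code equals its minimum Lee weight, this gives $d_\LL(\C)\ge k$.

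I do not anticipate a real obstacle here: the argument is entirely a rank-nullity observation feeding into the already-proved disjoint-support lemma. The only subtle point is to remember to invoke $q>2$ explicitly, since this is precisely the hypothesis under which Lemma \ref{lem: disjoint support} rules out disjoint nonzero supports in an $\LL$-MWS code.
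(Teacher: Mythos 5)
Your proof is correct and follows essentially the same route as the paper: the paper also produces, for a codeword of support size $\delta<k$, a second nonzero codeword vanishing on those $\delta$ positions (by noting that at least $q^{k-\delta}>1$ codewords are zero there, which is exactly your rank--nullity count) and then invokes Lemma \ref{lem: disjoint support} for the contradiction. Your Step 3 just makes explicit the ``in particular'' deduction that the paper leaves implicit.
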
 
	\begin{proof}
		Let $ c $ be a codeword with minimal support in $ \C $, say $ |\textrm{supp}(c)|=\delta $. Since $ \C $ is a linear code, in any fixed $ t\le k $ positions there are at least $ q^{k-t} $ codewords with $ 0 $ in each of these coordinate position.  If $ \delta < k $ then Lemma \ref{lem: disjoint support} gives a contradiction.  	
	\end{proof}
	
	\begin{remark}
In the language of \cite{Cohen1985}, Lemma (\ref{lem: disjoint support}) shows that in the non-binary cases,  $ \mathcal L $-MWS codes are \textit{intersection codes}.   Corollary \ref{cor: supp at least k} thus follows from Corollary 1 of \cite{Lempel1977} with the observation that $ d_{\LL}\ge d_H$.  		
	\end{remark}	
	
	\begin{lemma} \label{lem: lb length of LMWS}
		If $ \C $ is an  $  [n,k]_q $ $ \LL- $MWS code with $ q>2 $ then $ n\ge \frac{q^k-1}{q-1} + \left\lceil\frac{2(k-1)}{q-1}\right\rceil$. An   $  [n,k]_2 $ $ \LL- $MWS code satisfies $ n\ge 2^k-1$.  
	\end{lemma}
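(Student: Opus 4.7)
The plan is to handle the two cases separately. For $q=2$ the Lee and Hamming weights coincide, as noted just after the definition of the Lee weight, so an $\LL$-MWS code over $\F_2$ is nothing other than a Hamming-MWS code over $\F_2$. The claim $n \ge 2^k-1$ then follows from the value $M(H,k,2)=2^k-1$ recorded in the first summary table.

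For $q > 2$, I would start from Corollary \ref{cor: supp at least k}, which asserts that every non-zero codeword $c \in \C$ satisfies $|\mathrm{supp}(c)| \ge k$. Because every non-zero element of $\Z_q$ has Lee weight at least $1$, this immediately gives $\lw{c} \ge k$ for every $0 \neq c \in \C$. Consequently the integers $1, 2, \ldots, k-1$ are absent from $\lw{\C}$.

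The second step is a short counting argument combining this absence with the MWS condition. Since $w_{\LL}(\Z_q) = \{1, 2, \ldots, m\}$ with $m = \tfrac{q-1}{2}$, every non-zero word in $\Z_q^n$ has Lee weight in $\{1, 2, \ldots, mn\}$, so by the previous step $\lw{\C} \subseteq \{k, k+1, \ldots, mn\}$, a set of cardinality $mn - (k-1)$. Theorem \ref{thm: defn for MWS Lee} gives $|\lw{\C}| = \tfrac{q^k-1}{2}$, hence $\tfrac{q^k-1}{2} \le \tfrac{(q-1)n}{2} - (k-1)$, which rearranges to $n \ge \tfrac{q^k-1}{q-1} + \tfrac{2(k-1)}{q-1}$. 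Since $n$ and $\tfrac{q^k-1}{q-1}$ are both integers, the difference $n - \tfrac{q^k-1}{q-1}$ is an integer bounded below by $\tfrac{2(k-1)}{q-1}$, so it is in fact bounded below by $\lceil \tfrac{2(k-1)}{q-1} \rceil$, giving the advertised inequality.

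There is essentially no obstacle here: the whole argument is Corollary \ref{cor: supp at least k} (which rules out the $k-1$ smallest Lee weights) plus the elementary upper bound $|\lw{\C}| \le mn - (k-1)$ from Lemma \ref{lem: more bounds on length in general case}. The only mildly delicate point is the final ceiling step, which rests on the integrality of $n - \tfrac{q^k-1}{q-1}$.
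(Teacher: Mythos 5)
Your proposal is correct and follows essentially the same route as the paper: both use Corollary \ref{cor: supp at least k} to exclude the Lee weights $1,\ldots,k-1$, then count $|\lw{\C}|\le \frac{q-1}{2}n-(k-1)$ against the MWS value $\frac{q^k-1}{2}$ and invoke integrality for the ceiling, with the $q=2$ case reduced to the known binary (Hamming) bound $n\ge 2^k-1$. The only cosmetic difference is that you attribute the bound $|\lw{\C}|\le mn-(k-1)$ to Lemma \ref{lem: more bounds on length in general case} (which only gives $|\lw{\C}|\le mn$); the refinement really comes from the corollary, as your own argument already makes clear.
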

	\begin{proof}
		The case $ q=2 $ follows from (\ref{eqn: low lb on length of LMWS}).  Assume $ q>2 $. The number of possible distinct non-zero Lee weights of a vector $ c $ of length $ n $ is $ \left\lceil\frac{q-1}{2}\right\rceil\cdot n $, moreover, if $ c\in \C $ then we may reduce this number by $ k-1 $ (Cor. \ref{cor: supp at least k}) and the $ \LL- $MWS property gives
		\[
		\frac{q-1}{2} \cdot n -k+1 \ge \frac{q^k-1}{2}.
		\]
		
	\end{proof}
	
	\subsection{Existence of $ \LL $-MWS codes for small $ k,q $}\label{subsec: Existence of LMWS codes for small k,q}
	For small values of $ k $ and $ q $ we now provide some values of  $ L(n,k,q) $ and $ L(k,q) $. For dimension $ k =1$   we simply observe that for all prime $ q $, $w_{\LL}( \langle e_1\rangle)=\frac{q-1}{2}=L(1,q) $ and we thus obtain (rather trivial) MWS codes for any $ q $. For $ q=2 $, and $ q=3 $ the Hamming and Lee metrics coincide, so  \cite{Alderson2019} and \cite{Shi2019} do provide some results. For small values of $ q $ and $ k $, computations were carried out by Morine in \cite{BenMorineMScThesis2022}, see Table \ref{table: optimap bounds for Lee Metric}. 
	
	\begin{table}[h]
		\caption{Optimal bounds for the Lee Metric, small values of $ q $ prime, $ n\ge 1 $}	 \label{table: optimap bounds for Lee Metric}
		\begin{center}
			\begin{tabular}{|c|c|c|c|}
				\hline
				Bound	& Reference & Bound	& Reference \\
				\hline
				$ L_{\LL}(n,1,q) =\frac{q-1}{2} $ ($ \star $)	& See above  &		$ L_{\LL}(k,2)=2^k-1 $ 	&  \cite{Shi2019} \\
				\hline
				$ L_{\LL}(32,3,3)=13 $	$ (\star)  $&  \cite{Alderson2019} &  	 $ L_{\LL}(2,3)=6 $ & \cite{Alderson2019} \\
				\hline
				$ L_{\LL}(16,2,7)=   24 $ $ (\star)  $& \cite{BenMorineMScThesis2022}  &  	&  \\
				\hline
				$ L_{\LL}(46,2,13)=   84  $ $ (\star)  $	&  \cite{BenMorineMScThesis2022}  &	$ L_{\LL}(2,2,5)= 4 $	&  \cite{BenMorineMScThesis2022}\\
				\hline
				$ L_{\LL}(86,2,19)=   180 $	 $ (\star)  $	& \cite{BenMorineMScThesis2022}  & $ L_{\LL}(3,2,5)= 6 $	& \cite{BenMorineMScThesis2022} \\
				\hline
				$ L_{\LL}(11,2,5)=    12 $ $ (\star)  $		& \cite{BenMorineMScThesis2022}		& $ L_{\LL}(4,2,5)= 8 $ & \cite{BenMorineMScThesis2022} \\
				\hline
				$ L_{\LL}(34,2,11)=  60  $ $ (\star)  $ & \cite{BenMorineMScThesis2022}	&  $ L_{\LL}(5,2,5)= 8 $	& \cite{BenMorineMScThesis2022} \\
				\hline
				$ L_{\LL}(76,2,17)=  144  $ $ (\star)  $ & \cite{BenMorineMScThesis2022}  &	$ L_{\LL}(6,2,5)= 9 $		& \cite{BenMorineMScThesis2022} \\
				\hline
				$ L_{\LL}(126,2,23) =  264 $ $ (\star)  $	& \cite{BenMorineMScThesis2022} &  $ L_{\LL}(7,2,5)= 9 $ 	& \cite{BenMorineMScThesis2022} \\
				\hline
			\end{tabular}
		\end{center}
	\end{table}

		Entries marked with a $ \star $  indicate codes that are MWS.
		In the table, the bounds in \cite{BenMorineMScThesis2022} were found using exhaustive computer search. Note that based on Table \ref{table: optimap bounds for Lee Metric} we have
		\[
		8\le M(\LL, 2,5)\le 11.
		\]

	%	\newpage

		\subsection{Existence of $ \LL $-MWS codes} \label{subsec: e-LMWS}
		Having seen some examples of  $ \LL $-MWS codes for specific values of $ q $ and $ k $,  we now provide a general construction for $ \LL $-MWS codes, establishing their existence for all $ q, k $. We shall find the following to be of use in our construction.

		\begin{lemma}\label{lem: absolute values giving equality}
			Let $ u_i,v_i \in \Z_r $, $ 1\le i \le k $, where $ r>2 $ is an odd integer. 
			\begin{enumerate}
				\item If $ |u_1-u_2|=|u_1+u_2| $ then $ u_1=0$ or  $u_2=0 $.
				\item If $ |u_i|=|v_i| $, and $ |u_i+u_j|=|v_i+v_j| $ for all $ i,j $ then\\ $ (u_1,u_2,\ldots,u_k)=\pm(v_1,v_2,\ldots,v_k) $.
			\end{enumerate}
		\end{lemma}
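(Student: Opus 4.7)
The plan is to prove the two parts in order, using Part 1 as the main tool for Part 2.

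For Part 1, the key observation is that $|-a| = |a|$ in the Lee weight, so the condition $|u_1 + u_2| = |u_1 - u_2|$ is invariant under replacing $u_i$ by $-u_i$. Thus I can assume without loss of generality that $u_1, u_2$ lie in the positive range $\{1, 2, \ldots, (q-1)/2\}$, provided they are non-zero. I would then split on where the integer sum $u_1 + u_2$ falls. If $u_1 + u_2 \le (q-1)/2$, then the Lee weight of $u_1+u_2$ equals the integer sum $u_1 + u_2$, while the Lee weight of $u_1 - u_2$ equals the integer absolute value $|u_1 - u_2|$; equating these forces $\min(u_1, u_2) = 0$, contradicting our non-zero assumption. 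If $u_1 + u_2 \ge (q+1)/2$, the Lee weight of $u_1 + u_2$ is $q - (u_1 + u_2)$; equating this to $|u_1 - u_2|$ (and taking $u_1 \ge u_2$ without loss of generality) gives $q - u_1 - u_2 = u_1 - u_2$, hence $2u_1 = q$, which is impossible since $q$ is odd.

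For Part 2, the hypothesis $|u_i| = |v_i|$ in $\mathbb{Z}_q$ forces $v_i \in \{u_i, -u_i\}$, so I can write $v_i = \epsilon_i u_i$ for some $\epsilon_i \in \{+1, -1\}$ (with $\epsilon_i$ chosen arbitrarily whenever $u_i = 0$). The goal is to show that all the $\epsilon_i$ can be chosen equal. Pick any two indices $i, j$ with $u_i, u_j \neq 0$. The hypothesis $|u_i + u_j| = |v_i + v_j| = |\epsilon_i u_i + \epsilon_j u_j|$ is automatic when $\epsilon_i = \epsilon_j$, but when $\epsilon_i = -\epsilon_j$ it becomes $|u_i + u_j| = |u_i - u_j|$, which by Part 1 forces $u_i = 0$ or $u_j = 0$, a contradiction. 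Therefore $\epsilon_i$ is constant across all indices with $u_i \neq 0$; assigning this common value to the zero coordinates as well yields $v = \epsilon u$ for a single $\epsilon \in \{+1,-1\}$.

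The main obstacle is the case analysis in Part 1, where one must track when the sum $u_1 + u_2$ wraps around modulo $q$ and becomes a negative element; the oddness of $q$ is essential precisely to rule out the borderline case $2u_1 = q$. Once Part 1 is established, Part 2 follows cleanly by pairwise sign compatibility.
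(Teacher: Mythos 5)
Your proof is correct. For Part 2 you take essentially the same route as the paper: write $v_i=\epsilon_i u_i$ with $\epsilon_i\in\{\pm 1\}$ and use Part 1 to force all signs on the support to agree (the paper phrases this as an induction on $k$ starting from the $k=2$ case, you as a direct pairwise argument; the content is identical). For Part 1 your execution differs in presentation: the paper invokes the fact that in $\Z_q$ one has $|a|=|b|$ if and only if $a=\pm b$, so $u_1-u_2=\pm(u_1+u_2)$, and cancels to get $2u_1=0$ or $2u_2=0$, which forces $u_1=0$ or $u_2=0$ because $2$ is invertible when $q$ is odd. You instead work with integer representatives, normalizing both elements into the positive range and splitting on whether the integer sum stays below $(q+1)/2$ or wraps around; this is more computational but hits the same essential obstruction (the impossible equation $2u_1=q$). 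The paper's version is shorter; yours has the minor virtue of not needing to first justify the equivalence $|a|=|b|\iff a=\pm b$, since every case is checked directly on representatives. Both arguments are sound.
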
 
		\begin{proof} For the first part we have
			\begin{align*}
				|u_1-u_2|=|u_1+u_2|& \iff u_1-u_2= \pm (u_1+u_2) \\ 
				%& \phantom{\iff} \text{ or } u_1-u_2=-u_1-u_2\\
				& \iff 2u_2=0 \text{ or } 2u_1=0\\
				& \iff u_2=0 \text{ or } u_1=0 & \text{(since $ r $ is odd)}. 
			\end{align*}
			For the second part we note that the case of $ k=1 $ is patently true, so let $ k=2 $. If either $ u_1=0 $ or $ u_2=0 $ then the result holds, so assume otherwise. The hypothesis becomes  $ v_1=\pm u_1 $, $ v_2=\pm u_2 $, and   \[|u_1+u_2|=|v_1+v_2|=|\pm u_1 \pm u_2|,\] so  from part 1 we obtain $ (u_1,u_2)=\pm(v_1,v_2) $. Induction completes the proof.
		\end{proof}

		We now show that MWS Lee codes exist for all $ q $ and $ k $.
		\begin{theorem}\label{thm: Lee MWS bound determined}
			$ L_{\LL}(k,q) = \frac{q^k-1}{2} $, and $\displaystyle  M(\LL, k, q)\le \frac{ (\frac{q+1}{2})^{2k^2-3k+1}-1}{\frac{q+1}{2}-1} $. 
		\end{theorem}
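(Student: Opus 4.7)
The theorem combines the exact value of $L_\LL(k,q)$ with an upper bound on $M(\LL,k,q)$. I would prove the two parts together by first settling the upper bound on $L_\LL(k,q)$ abstractly, and then producing a single inductive construction that simultaneously realizes the matching lower bound and gives the claimed length bound on $M(\LL,k,q)$.

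For the identity $L_\LL(k,q)=\frac{q^k-1}{2}$, the upper bound is an immediate instance of Proposition \ref{prop: Maximum Weight Spectrum and Full Weight Spectrum Codes}(4). Since $|\alpha|=|-\alpha|$ in $\F_q$, the nonzero scalar multiples of any $u\in\F_q^n$ split into $\pm$-pairs, so $|w_\LL(\langle u\rangle)|\le \frac{q-1}{2}$; the bound is realised at $u=e_1$, giving $\Delta=\frac{q-1}{2}$. Therefore $L_\LL(k,q)\le \frac{q^k-1}{q-1}\cdot\frac{q-1}{2}=\frac{q^k-1}{2}$. The matching lower bound and the upper bound on $M(\LL,k,q)$ will both be witnessed by the construction below.

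The construction is inductive, with $s=\frac{q+1}{2}$. The base case $k=1$ is $\C_1=\langle e_1\rangle$ of length $1$, trivially $\LL$-MWS. For the step $k-1\to k$, assume $\C_{k-1}$ is an $\LL$-MWS code with generator $G_{k-1}$ of length $n_{k-1}=\frac{s^{(2k-3)(k-2)}-1}{s-1}$, arranged so that every Lee weight in $w_\LL(\C_{k-1})$ has a unique base-$s$ expansion whose digits recover $|\alpha_i|$ for $i<k$ and $|\alpha_i+\alpha_j|$ for $i<j<k$. Build $G_k$ in two blocks. The first consists of $s^{4k-5}$ copies of the columns of $G_{k-1}$, each augmented by a zero entry in the new $k$th coordinate. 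The second is a fresh block of $\frac{s^{4k-5}-1}{s-1}$ columns of the forms $e_k$ and $e_i+e_k$ (for $1\le i<k$), with the multiplicities of these columns chosen as distinct powers of $s$ from $\{s^0,\ldots,s^{4k-6}\}$ so that the block contributes to $\lw{\alpha G_k}$ a base-$s$ representation of the $k$ new quantities $|\alpha_k|$ and $|\alpha_i+\alpha_k|$. The replication factor $s^{4k-5}$ in the first block is calibrated so that the contribution of that block sits in strictly higher-order base-$s$ digits than the fresh block; the two contributions therefore add without interference, and the full Lee weight retains a unique base-$s$ decomposition.

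To confirm $\C_k$ is $\LL$-MWS, suppose $\lw{\alpha G_k}=\lw{\beta G_k}$. Comparing base-$s$ digits forces $|\alpha_i|=|\beta_i|$ for every $i\le k$ and $|\alpha_i+\alpha_j|=|\beta_i+\beta_j|$ for every pair $i<j\le k$; Lemma \ref{lem: absolute values giving equality}(2) then yields $\alpha=\pm\beta$, so $c=\pm c'$. Because Lee weight is invariant under negation, this is exactly the MWS property, and we obtain $|w_\LL(\C_k)|=\frac{q^k-1}{2}$. Finally the length recurrence $n_k=s^{4k-5}n_{k-1}+\frac{s^{4k-5}-1}{s-1}$ telescopes via the identity $(2k-3)(k-2)+(4k-5)=(2k-1)(k-1)$ to give $n_k=\frac{s^{(2k-1)(k-1)}-1}{s-1}$. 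The hard step is the design of the fresh-column block: one must choose the multiplicities so that the $k$ new variables genuinely become separate base-$s$ digits of the Lee weight, and simultaneously verify that $s^{4k-5}$ is large enough to dominate the total fresh contribution. Everything else follows almost formally once those combinatorial choices are in place.
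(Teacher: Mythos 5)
Your upper-bound argument for $L_{\LL}(k,q)\le\frac{q^k-1}{2}$ is exactly the paper's ($\Delta=\frac{q-1}{2}$ from the $\pm$ pairing, then Proposition \ref{prop: Maximum Weight Spectrum and Full Weight Spectrum Codes}), and your core mechanism for the construction --- make each of $|\alpha_i|$ and $|\alpha_i+\alpha_j|$ a separate base-$s$ digit of the Lee weight and invoke Lemma \ref{lem: absolute values giving equality} to force $\alpha=\pm\beta$ --- is the same mechanism the paper uses. The paper, however, does it in one shot and more economically: its generator is $\bigl[e_1\,|\,(e_2)^{\alpha}\,|\cdots|\,(e_k)^{\alpha^{k-1}}\,|\,(e_1+e_2)^{\alpha^{k}}\,|\cdots|\,(e_1+\cdots+e_k)^{\alpha^{2k-2}}\bigr]$ with $\alpha=\frac{q+1}{2}$, using only the $k$ singletons and the $k-1$ \emph{prefix} sums ($2k-1$ digits, length $\frac{\alpha^{2k-1}-1}{\alpha-1}$), and applies Lemma \ref{lem: absolute values giving equality} in a chain (get $(u_1,u_2)=\pm(v_1,v_2)$, then treat $u_1+u_2$ as a single symbol against $u_3$, and so on) rather than requiring all $\binom{k}{2}$ pairwise sums.

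The genuine gap is in your length accounting: it does not deliver the stated bound on $M(\LL,k,q)$. Your closed form $n_{k}=\frac{s^{(2k-1)(k-1)}-1}{s-1}$ evaluates to $0$ at $k=1$, while your base case has $n_1=1$; propagating $n_1=1$ through the recurrence $n_k=s^{4k-5}n_{k-1}+\frac{s^{4k-5}-1}{s-1}$ gives $n_k=\frac{s^{(2k-1)(k-1)+1}-1}{s-1}$, which strictly exceeds the bound you are trying to prove. Concretely, for $k=2$ the theorem requires length at most $s^2+s+1$, but your code has length at least $s^3+s+1$: the $s^{3}$ replicated copies of $G_1$ alone already blow the budget. (There is also an internal inconsistency: a block of $k$ column types, each with multiplicity a single distinct power of $s$, cannot have $\frac{s^{4k-5}-1}{s-1}=\sum_{i=0}^{4k-6}s^i$ columns.) The fix is that your parameters are much larger than necessary: since the fresh block has only $k$ column types, multiplicities $s^0,\ldots,s^{k-1}$ suffice, giving replication factor $s^{k}$, fresh-block size $\frac{s^{k}-1}{s-1}$, and total length $\frac{s^{k(k+1)/2}-1}{s-1}$, which does satisfy the stated bound for $k\ge2$ because $k(k+1)/2\le 2k^2-3k+1$ there. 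As written, though, your proof of the second assertion of the theorem does not go through.
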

		
		\begin{proof}
			For $ q=2 $ see \cite{MR4014640}. Assume $ q>2 $. 	Our proof is constructive. Let $ \alpha=\frac{q+1}{2} $ and define the generator matrix $ G_{k,q} $ by
			\[
			\left[ e_1\left|(e_2)^{\alpha}\left|\cdots \left|(e_k)^{\alpha^{k-1}}\left|\left(\sum_{i=1}^2 e_i\right)^{\alpha^{k}}\left| \left(\sum_{i=1}^3 e_i\right)^{\alpha^{k+1}}\left|\cdots\left| \left(\sum_{i=1}^k e_i\right)^{\alpha^{2k-2}}\right.\right.\right.\right.\right.\right.\right. \right]
			\]
			Recall, $ e_i $ denotes the unit column vector with $ 1 $ in entry $ i $ and exponents correspond to repetition numbers. The length of $ \C $ is $ n=\sum_{i=0}^t \alpha^i$ where $ t=2k-2 $. 
			From the form of $ G_{k,q} $ it follows that the Lee weight of any codeword $ uG_{k,q} $ may be expressed in base $ \alpha $  as 
			\[
			w_{\LL}(uG_{k,q}) = |u_1|+|u_2|\alpha +|u_3|\alpha^3 + \cdots +|u_1+u_{2} +\cdots +u_k|\alpha^{ 2k-2}.
			\]  
			If two codewords $ uG_{k,q} $ and $ vG_{k,q} $ have the same weight then $ |u_i|=|v_i| $, for all $ i $ and $ |u_1+u_2|=|v_1+v_2|$. From the Lemma \ref{lem: absolute values giving equality} we obtain $ (u_1,u_2)=\pm (v_1,v_2) $. Likewise, since $ |u_1+u_2+u_3|=|v_1+v_2+v_3|$  and $ |u_1+u_2|=|v_1+v_2|$, Lemma \ref{lem: absolute values giving equality} gives $ (u_1,u_2,u_3)=\pm (v_1,v_2,v_3) $. Continuing inductively we obtain $ \C $ is $ \LL $-MWS.\\
			
		\end{proof}

\begin{example}		
 If $ k=2 $ and $ q=5 $ then the generator matrix as in the proof of Theorem \ref{thm: Lee MWS bound determined} is  
		\[
		G_{2,5}= \left[ \begin{array}{c}
			1\\0
		\end{array}\begin{array}{c}
			0\\1
		\end{array}  
		\begin{array}{c}
			0\\1
		\end{array}  
		\begin{array}{c}
			0\\1
		\end{array}  
		\begin{array}{c}
			1\\1
		\end{array}\begin{array}{c}
			1\\1
		\end{array}\begin{array}{c}
			1\\1
		\end{array}\begin{array}{c}
			1\\1
		\end{array}\begin{array}{c}
			1\\1
		\end{array}\begin{array}{c}
			1\\1
		\end{array}\begin{array}{c}
			1\\1
		\end{array}\begin{array}{c}
			1\\1
		\end{array}\begin{array}{c}
			1\\1
		\end{array} \right].
		\] 
\end{example}

		\subsection{Lee FWS Codes} \label{subsec: LeeFWS}
		
		Recall from the definition that for $ q $ odd, an $ [n,k]_q $-code is $ \LL $-FWS if $ |w_{\LL}(\C)|=n\cdot \frac{q-1}{2} $. Table \ref{table: optimap bounds for Lee Metric} provides some examples of FWS codes for small values of $ q $ and $ k $.  Based simply on the properties of a component-wise metric we get an initial upper bound on the length of $ \LL $-FWS codes. Indeed, from part \ref{Part 2: lemma on bounds in general case} of Lemma \ref{lem: bounds on length in general case} and Proposition \ref{prop: Maximum Weight Spectrum and Full Weight Spectrum Codes}  we obtain $ n\le \frac{q^k-1}{q-1} $.
		By considering Theorem \ref{thm: ub2 length of General FWS} we can get a significantly better bound than this.

		\begin{lemma}\label{lem: LFWS necessary and sufficient}
			There exists an $ [n,k]_q $ $ \LL $-FWS code if and only if   \[n \le \sum\limits_{i=0}^{k-1} \left(\frac{q+1}{2}\right)^i \;\;\;\;\left(= \frac{\left(\frac{q+1}{2}\right)^k-1}{\left(\frac{q+1}{2}\right)-1}\right).\]
			In particular $ N(\LL,k,q)= \frac{\left(\frac{q+1}{2}\right)^k-1}{\left(\frac{q+1}{2}\right)-1}.$
		\end{lemma}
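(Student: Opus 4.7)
The plan is to reduce the statement to a direct application of Theorem \ref{thm: General FWS necessary and sufficient}. To do so, I need only verify that the Lee weight on $\F_q$ (for $q$ an odd prime, which is the setting of interest since for $q=2,3$ the Lee weight agrees with the Hamming weight) fits the hypothesis $w_\mu(\F_q)=\{1,2,\ldots,m\}$ of that theorem, with the correct value of $m$. First, the Lee weight is component-wise by its very definition $w_\LL(c)=\sum_{j=1}^n |c_j|$. Second, restricting to $\F_q=\Z_q$ and recalling that $|\alpha|=\alpha$ for $1\le \alpha\le (q-1)/2$ and $|\alpha|=q-\alpha$ for $(q+1)/2\le \alpha \le q-1$, one sees that every value in $\{1,2,\ldots,(q-1)/2\}$ is attained (each attained by exactly two elements of $\F_q$, one ``positive'' and one ``negative''). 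Hence $w_\LL(\F_q)=\{1,2,\ldots,(q-1)/2\}$ and so $m=(q-1)/2$ in the language of Theorem \ref{thm: General FWS necessary and sufficient}.

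With this identification, Theorem \ref{thm: General FWS necessary and sufficient} immediately yields that an $[n,k]_q$ $\LL$-FWS code exists if and only if $k\le n\le \frac{(m+1)^k-1}{m}$. Substituting $m+1=(q+1)/2$ and $m=(q+1)/2-1$ gives exactly the bound in the statement, namely
\[
n\le \frac{\left(\tfrac{q+1}{2}\right)^k - 1}{\tfrac{q+1}{2}-1} = \sum_{i=0}^{k-1}\left(\tfrac{q+1}{2}\right)^i,
\]
and consequently $N(\LL,k,q)$ equals the right-hand side. Since Theorem \ref{thm: General FWS necessary and sufficient} comes with an explicit construction (take $G=[e_1\mid e_2^{(q+1)/2}\mid \cdots \mid e_k^{((q+1)/2)^{k-1}}]$ and truncate columns from the right when $n$ is smaller), both directions of the ``if and only if'' are obtained. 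There is no real obstacle here: the work is entirely in recognizing that the Lee weight restricted to $\F_q$ takes the consecutive values $1,2,\ldots,(q-1)/2$, after which the lemma is a specialization of a theorem already proved.
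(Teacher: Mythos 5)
Your proof is correct and follows essentially the same route as the paper: both reduce the lemma to the general FWS theorem by identifying $m=\Delta=(q-1)/2$ for the Lee weight on $\F_q$ ($q$ an odd prime) and substituting $m+1=\frac{q+1}{2}$. If anything, your version is slightly more careful, since you invoke the full ``if and only if'' theorem (with its explicit truncated-generator construction) rather than only the upper-bound statement, and you verify explicitly that $w_\LL(\F_q)=\{1,\ldots,(q-1)/2\}$.
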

		\begin{proof}
			Follows from Theorem \ref{thm: ub2 length of General FWS} by observing that for the Lee metric $  m=\Delta  =\lceil (q-1)/2 \rceil $. 
		\end{proof}
		
		Comparing the bounds in the Lemmata \ref{lem: lb length of LMWS}  and \ref{lem: LFWS necessary and sufficient} we obtain the following non-existence result.
		
		\begin{corollary}
			There exists an $ [n,k]_q $ code that is both $ \mathcal L $-MWS and $ \mathcal{L} $-FWS if and only if $ q=2 $ .
		\end{corollary}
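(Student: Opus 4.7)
The biconditional splits into two short directions.

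For the forward (existence) direction at $q=2$, I would exploit the fact that for the binary alphabet the Lee and Hamming weights coincide. The table cited earlier records $M(H,k,2)=2^k-1=N(H,k,2)$, so there is a binary Hamming MWS code $\C$ of length exactly $2^k-1$. For this code $|w_H(\C)|=2^k-1$ and $n\cdot m=(2^k-1)\cdot 1=2^k-1$, so Proposition \ref{prop: Maximum Weight Spectrum and Full Weight Spectrum Codes}(\ref{propP2b: Maximum Weight Spectrum and Full Weight Spectrum Codes}) declares $\C$ to be FWS as well. Hence $\C$ is simultaneously $\mathcal{L}$-MWS and $\mathcal{L}$-FWS, as required.

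For the reverse (non-existence) direction, I would suppose $q>2$ is prime (hence odd) and that a code $\C$ is both $\mathcal{L}$-MWS and $\mathcal{L}$-FWS, aiming for a contradiction from the two length bounds already in hand. Lemma \ref{lem: LFWS necessary and sufficient} caps the length at
\[
n\le \sum_{i=0}^{k-1}\left(\frac{q+1}{2}\right)^{i},
\]
whereas Lemma \ref{lem: lb length of LMWS} forces
\[
n\ge \sum_{i=0}^{k-1}q^{i}+\left\lceil\frac{2(k-1)}{q-1}\right\rceil.
\]
Since $q>\tfrac{q+1}{2}$ for $q>1$, a term-by-term comparison gives $q^i>\bigl(\tfrac{q+1}{2}\bigr)^i$ for every $i\ge 1$; summing and adding the nonnegative ceiling term shows that the MWS lower bound strictly exceeds the FWS upper bound as soon as $k\ge 2$, producing the desired contradiction.

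The main obstacle is really the degenerate case $k=1$: both bounds collapse to $n=1$, and indeed the one-coordinate code $\F_q$ is trivially $\mathcal{L}$-MWS and $\mathcal{L}$-FWS for every $q$. The statement should therefore be read with the implicit convention $k\ge 2$ that governs the rest of the paper. If one prefers to avoid the edge case entirely, a cleaner alternative is to invoke Corollary \ref{cor: FWS and MWS} directly: any code that is simultaneously $\mu$-MWS and $\mu$-FWS for a coordinate-wise weight must satisfy $m=q-1$, but for the Lee weight on $\F_q$ with $q$ an odd prime one has $m=(q-1)/2$, and $(q-1)/2=q-1$ forces $q=1$, leaving $q=2$ as the only possibility.
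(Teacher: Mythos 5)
Your proof is correct and follows essentially the same route as the paper: the binary case via the coincidence of Lee and Hamming weight, and the odd-prime case by playing the FWS length cap of Lemma~\ref{lem: LFWS necessary and sufficient} against the MWS length floor of Lemma~\ref{lem: lb length of LMWS} (the paper's own proof cites Lemma~\ref{lem: LFWS necessary and sufficient} together with Corollary~\ref{cor: FWS and MWS}, which is exactly your stated alternative). Your observation about $k=1$ is a genuine catch --- the $[1,1]_q$ code is trivially both $\mathcal{L}$-MWS and $\mathcal{L}$-FWS for every odd prime $q$, so the statement (and Corollary~\ref{cor: FWS and MWS}, whose final step $\frac{(m+1)^k-1}{m}=\frac{q^k-1}{q-1}\Rightarrow m=q-1$ needs $k\ge 2$) should indeed carry the implicit convention $k\ge 2$.
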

		\begin{proof}
			The Hamming and Lee metrics are the same for binary codes, so for the binary case see  \cite{MR4014640}.  The case $ q\ge 3 $ follows from Lemma \ref{lem: LFWS necessary and sufficient} and Corollary \ref{cor: FWS and MWS}.	
		\end{proof}

		\subsection{Manhattan MWS Codes} \label{subsec: MMWS}
		If $ u\in \F_q^n $ is a nonzero vector then $ | \langle u \rangle|= q $, so $ |w_{\MM}(\langle u \rangle)|\le q-1 $. One may show that strict inequality may occur only if $q$ is a divisor of $w_{\MM}(u)$. As such, an immediate bound on the $ \MM $-weight spectrum of a linear code is
		\begin{equation}\label{eqn: MMWS}
			L_{\MM}(k,q)\le  q^k-1.
		\end{equation}
		It is clear that for $ e_1 \in \F_q^n$, $ |w_M(\langle e_1 \rangle)|= q-1 $, so for $ n\ge1 $  we have   
		\begin{equation}\label{eqn: MMWS k=1}
			L_{M}(n,1,q)=L_{M}(1,q) = q-1.
		\end{equation}
		In \cite{BenMorineMScThesis2022} values of  $ L_{M}(n,k,q) $  were explicitly computed for $ q=3,5 $ and $ k=2 $, these values  appear in Table \ref{table: small values of L_M(n,k,q)}. Recall that for $ q=2 $ the Manhattan, Hamming, and Lee weights coincide, so we refer to section \ref{subsec: Existence of LMWS codes for small k,q}  for the binary case.   
		
		\begin{table}[h] 
			\caption{Optimal bounds for for the Manhattan Weight}
			\label{table: small values of L_M(n,k,q)}	
			\begin{center}
				\begin{tabular}{|c|c|c|c|}
					\hline
					Bound	& $ \MM $-MWS &  $ \MM $-FWS \\
					\hline
					$ L_{\MM}(3,2,3)=6 $&  No  & Yes  \\
					\hline
					$ L_{\MM}(4,2,3)=8 $&  Yes &   Yes  \\
					\hline
					$ L_{\MM}(3,2,5)=12 $&  No & 	 Yes  \\
					\hline
					$ L_{\MM}(4,2,5)=16 $&  No &	 Yes  \\
					\hline
					$ L_{\MM}(5,2,5)=20 $& No &  Yes  \\
					\hline
					$ L_{\MM}(6,2,5)=24 $&Yes &   Yes  \\
					\hline
				\end{tabular}
			\end{center}
		\end{table}
		
		Since the Manhattan weight is a coordinate-wise weight function, the maximum weight of an $ n $-vector is $ n(q-1) $. If $ \C $ is an $ [n,k]_q $ code then $ |w_\MM(\C)|\le \max w_\MM(\C)\le n(q-1) $, giving the following.    
		\begin{lemma}\label{lem: ub on MMWS length}
			If $ \C $ is an $ [n,k]_q $ code then $ n\ge \frac{|w_\MM(\C)|}{q-1} $. In particular, if $ \C $ is an $ [n,k]_q $ $ \MM $-MWS code then $ n\ge \frac{L_\MM(k,q)}{q-1} $ with equality if and only if $ w_\MM(\C)=\{1,2,\ldots,n(q-1)\} $.
		\end{lemma}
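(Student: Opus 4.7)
The statement is essentially a packaging of elementary observations, so the proof plan is short. The key point is that the Manhattan weight of any vector $c = (c_1, \ldots, c_n) \in \F_q^n$ is the ordinary integer sum $\sum_{i=1}^n c_i$, where each $c_i \in \{0, 1, \ldots, q-1\}$. This immediately gives the pointwise upper bound $w_\MM(c) \le n(q-1)$, attained only when $c = (q-1, q-1, \ldots, q-1)$.

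From here the first inequality is a two-step chain. First, since every element of $w_\MM(\C)$ is a positive integer, I would bound $|w_\MM(\C)| \le \max w_\MM(\C)$; second, the previous paragraph gives $\max w_\MM(\C) \le n(q-1)$. Together these give $|w_\MM(\C)| \le n(q-1)$, i.e., $n \ge |w_\MM(\C)|/(q-1)$. The ``in particular'' clause is then immediate from the definition of $L_\MM(k,q)$ as the maximum of $|w_\MM(\C)|$ over all $[n,k]_q$ codes, which forces $|w_\MM(\C)| = L_\MM(k,q)$ for an $\MM$-MWS code.

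For the equality characterization, I would observe that $n = |w_\MM(\C)|/(q-1)$ holds iff both inequalities above are equalities. Equality in $\max w_\MM(\C) \le n(q-1)$ means $\C$ contains a codeword of maximal possible Manhattan weight $n(q-1)$. Equality in $|w_\MM(\C)| \le \max w_\MM(\C)$ holds iff $w_\MM(\C)$ is exactly $\{1, 2, \ldots, \max w_\MM(\C)\}$ (no ``gaps'' in the weight set). Conjoining these two conditions yields $w_\MM(\C) = \{1, 2, \ldots, n(q-1)\}$, and conversely this set clearly has cardinality $n(q-1)$, giving equality.

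No step here is a serious obstacle; the entire argument is bookkeeping built on the identity $w_\MM(c) = c_1 + \cdots + c_n$ over $\Z$ and the fact that weights lie in $\Z^+$.
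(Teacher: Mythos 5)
Your proof is correct and follows essentially the same route as the paper, which derives the lemma from the chain $|w_\MM(\C)|\le \max w_\MM(\C)\le n(q-1)$ and reads off the equality case as the absence of gaps in the weight set together with attainment of the maximal weight $n(q-1)$. No discrepancies to note.
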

	 
		\begin{theorem}\label{thm: MMWS bound determined}
			For $ k\ge 1 $,	 \[ L_{\MM}(k,q) = q^k-1  ,\text{ and }   M(\MM, k, q)= \frac{q^k-1}{q-1}.\]
		\end{theorem}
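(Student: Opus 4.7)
The plan is to prove both equalities simultaneously by exhibiting an explicit construction that matches the existing upper and lower bounds. First, recall from inequality (\ref{eqn: MMWS}) we already have $L_{\MM}(k,q)\le q^k-1$, and from Lemma \ref{lem: ub on MMWS length} any $\MM$-MWS $[n,k]_q$-code satisfies $n\ge L_\MM(k,q)/(q-1)$. So if we can build an $[n,k]_q$-code of length $n=\frac{q^k-1}{q-1}$ whose Manhattan weight set equals $\{1,2,\ldots,q^k-1\}$, then simultaneously $L_\MM(k,q)=q^k-1$, the code is $\MM$-MWS, and $M(\MM,k,q)\le \frac{q^k-1}{q-1}$; the matching lower bound for $M$ then follows by plugging $L_\MM(k,q)=q^k-1$ back into Lemma \ref{lem: ub on MMWS length}.

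The construction I would use is analogous to the generator matrix appearing in the proof of Theorem \ref{thm: General FWS necessary and sufficient}, but with the base of the expansion taken to be $q$ rather than $m+1$. Specifically, let
\[
G_{k,q}=\left[\, e_1 \,\big|\, (e_2)^{q} \,\big|\, (e_3)^{q^2} \,\big|\, \cdots \,\big|\, (e_k)^{q^{k-1}}\,\right],
\]
a $k\times n$ matrix with $n=1+q+q^2+\cdots+q^{k-1}=\frac{q^k-1}{q-1}$, and let $\C$ be the code it generates. The key computation is that, because the supports of the rows of $G_{k,q}$ are pairwise disjoint and $w_\MM(\alpha)=\alpha$ for $\alpha\in\Z_q=\{0,1,\ldots,q-1\}$, for any $u=(\alpha_1,\ldots,\alpha_k)\in\F_q^k$ the codeword $uG_{k,q}$ has Manhattan weight
\[
\mw{uG_{k,q}}=\alpha_1+\alpha_2\,q+\alpha_3\,q^2+\cdots+\alpha_k\,q^{k-1},
\]
which is the base-$q$ expansion of an integer in $\{0,1,\ldots,q^k-1\}$. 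As $u$ ranges over $\F_q^k\setminus\{0\}$, this expression hits every integer in $\{1,2,\ldots,q^k-1\}$ exactly once by uniqueness of base-$q$ representations.

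This shows $|w_\MM(\C)|=q^k-1$, which simultaneously attains the upper bound in (\ref{eqn: MMWS}) (giving $L_\MM(k,q)=q^k-1$) and realizes the lower length bound of Lemma \ref{lem: ub on MMWS length} (giving $M(\MM,k,q)\le\frac{q^k-1}{q-1}$). Conversely, Lemma \ref{lem: ub on MMWS length} applied with $L_\MM(k,q)=q^k-1$ forces $M(\MM,k,q)\ge\frac{q^k-1}{q-1}$, so equality holds throughout. There is no serious obstacle here; the only point requiring care is to justify that the Manhattan weight of the single symbol $\alpha\in\Z_q$ is literally the integer $\alpha$ (not reduced modulo $q$), which is immediate from the definition $\mw{x}=\sum_i x_i$ with the coordinate entries read as integers in $\{0,\ldots,q-1\}$.
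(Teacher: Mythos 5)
Your proposal is correct and uses essentially the same argument as the paper: the identical generator matrix $G_{k,q}=[\,e_1\,|\,(e_2)^{q}\,|\,\cdots\,|\,(e_k)^{q^{k-1}}\,]$, with the construction attaining the upper bound (\ref{eqn: MMWS}) and the length lower bound coming from Lemma \ref{lem: ub on MMWS length}. The only (cosmetic) difference is that you justify the injectivity of the weight map directly via uniqueness of base-$q$ expansions, whereas the paper cross-references the argument of Theorem \ref{thm: Lee MWS bound determined}; your version is, if anything, more self-contained.
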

		
		\begin{proof}
			For $ q=2 $ see Table \ref{table: small values of L_M(n,k,q)}. Assume $ q>2 $. 	Our proof is constructive. Define the generator matrix $ G(k,q) $ by	
			\[
			G_{k,q}=\left[ e_1\left|(e_2)^{q}\left|\cdots \left|(e_k)^{q^{k-1}}\right.\right.\right. \right]
			\]
			An argument entirely similar to that in the proof of Theorem \ref{thm: Lee MWS bound determined} shows that two codewords $ uG_{k,q} $ and $ vG_{k,q} $ have the same weight if and only if  $  u= v  $,  whence $ \C $ is $ \MM $-MWS. For the second part we observe that the length of the code generated by $ G $ is $ \frac{q^k-1}{q-1} $ and appeal to Lemma \ref{lem: ub on MMWS length}. \\
		\end{proof}

		\subsection{Manhattan FWS codes} \label{subsec: MFWS}
		It is readily verified that $ w_\MM(F^n_q)=n(q-1) $, so $ \MM $-FWS codes are those with $ w_\MM(\C)= n(q-1) $.  Table \ref{table: small values of L_M(n,k,q)} exhibits examples of such codes for small $ k,q $. The following shows that in contrast to MWS codes, we may completely determine the optimal lengths of FWS codes.  
		
		\begin{theorem}\label{thm: ub length of MFWS}
			There exists an $ [n,k]_q $ $ \MM $-FWS code if and only if $ k\le n\le \frac{q^k-1}{q-1} $. In particular, 	
			\[
			N(\MM, k, q) =\frac{q^k-1}{q-1}.
			\] 	
		\end{theorem}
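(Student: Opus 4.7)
The plan is to obtain this as a direct corollary of Theorem \ref{thm: General FWS necessary and sufficient}, since the Manhattan weight already satisfies the hypothesis of that theorem. Concretely, treating coordinate entries as integers gives $w_\MM(a)=a$ for $a\in\Z_q=\{0,1,\ldots,q-1\}$, so $w_\MM(\Z_q\setminus\{0\})=\{1,2,\ldots,q-1\}$. Hence the Manhattan weight is a component-wise weight function whose restriction to $\Z_q$ has image $\{1,2,\ldots,m\}$ with $m=q-1$, which is exactly the hypothesis required to invoke Theorem \ref{thm: General FWS necessary and sufficient}.

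Substituting $m=q-1$ into the conclusion of that theorem yields: an $[n,k]_q$ $\MM$-FWS code exists if and only if
\[
k\le n\le \frac{(m+1)^k-1}{m}=\frac{q^k-1}{q-1},
\]
which gives both the characterization and $N(\MM,k,q)=\tfrac{q^k-1}{q-1}$ at once. The lower bound $n\ge k$ is, of course, forced by the fact that $\C$ is $k$-dimensional and we only consider codes with no identically zero coordinate.

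Because the statement is just a specialization, there is no real obstacle: the only thing to check is that the explicit generator matrices constructed in the proof of Theorem \ref{thm: General FWS necessary and sufficient} continue to be valid in the Manhattan setting, i.e.\ that every column of $G'$ is nonzero and every row is nonzero. This is immediate from the form of $G'$, since each unit vector $e_i$ is included at least once as a column. Thus invoking Theorem \ref{thm: General FWS necessary and sufficient} with $m=q-1$ completes the argument.
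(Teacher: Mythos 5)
Your proof is correct and takes essentially the same route as the paper: the paper likewise obtains the construction and the intermediate lengths by specializing the generator matrices and column-deletion procedure of Theorem \ref{thm: General FWS necessary and sufficient} to $m=q-1$. The only cosmetic difference is that the paper derives the upper bound $n\le \frac{q^k-1}{q-1}$ from Lemma \ref{lem: bounds on length in general case} together with Proposition \ref{prop: Maximum Weight Spectrum and Full Weight Spectrum Codes} (using $m=\Delta=q-1$) rather than by quoting the necessity half of Theorem \ref{thm: General FWS necessary and sufficient}, but the two bounds coincide here.
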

		\begin{proof}
			That $ n\le \frac{q^k-1}{q-1}  $ follows directly from part \ref{Part 2: lemma on bounds in general case} of Lemma \ref{lem: bounds on length in general case} and Proposition \ref{prop: Maximum Weight Spectrum and Full Weight Spectrum Codes} by observing that $ m=q-1=\Delta $.  The codes with generator $ G_{k,q} $ as described in the proof of Theorem \ref{thm: MMWS bound determined} are $ [n,k]_q $-codes with $ w_\MM(\C) =\frac{q^k-1}{q-1}=n $ and are thus $ \MM $-FWS.\\
			Generator matrices obtained by deleting columns of $ G_{k,q} $ in the manner described in the proof of Theorem \ref{thm: General FWS necessary and sufficient} show the existence of $ \MM $-FWS codes for $ k\le n \le \frac{q^k-1}{q-1} $.  	
		\end{proof}

		\section{Conclusion, future work, and open problems}\label{sec: conclusion}
		Here we have investigated the combinatorial problem of maximizing the number of distinct weights that a linear code may obtain. The literature relating to this problem has thus far focused on FWS and MWS codes with respect to Hamming weight. Under the Hamming metric, FWS codes are necessarily ``short", and $ MWS $ codes are necessarily ``long".  The existence question and in particular the maximum length $ N(H,k,q) $ of an $ [n,k]_q $ FWS code has been determined. However,  the determination of $ M(H, k,q) $, the minimum length of an $ [n,k]_q $ MWS code remains an open problem.\\   
		Here, we have generalized to the setting of component-wise weight functions, with a particular focus on the Lee, and Manhattan type weights. For a certain class of component-wise weight functions (which subsumes the Hamming, Lee, and Manhattan weight functions) we determine the necessary and sufficient conditions for the existence of $ \mu $-FWS codes (Theorem \ref{thm: General FWS necessary and sufficient}).   We establish through construction  the existence of $ \mu $-MWS codes for all $ q,k $  (Theorem \ref{thm: Lee MWS bound determined}), and thus an upper bound on $ M(\LL, k,q) $. Computational examples show our bound on $ M(\LL, k,q) $ is not generally tight, thus the determination of $ M(\LL, k,q) $  remains an open problem.  On the other hand, for Manhattan weights we were able to determine both $ N(\MM, k, q) $ and $ M(\MM, k, q) $ and provided a construction showing that in fact  $ N(\MM, k, q) = M(\MM, k, q) $  (Theorem \ref{thm: ub length of MFWS}).\\
		The determination of $ M(H,k,q) $, $ M(\LL,k,q) $, and more generally $ M(\mu,k,q) $  (for an arbitrary component-wise weight function $ w_\mu $ on $ \F_q^n $) may be an attractive endeavour for the community.

\bibliographystyle{spmpsci}      % mathematics and physical sciences
%\bibliographystyle{spphys}       % APS-like style for physics
%\bibliography{C:/Users/tlald/OneDrive/texed/my-texmf/bibtex/bib/main}   % name your BibTeX data base
%\bibliography{C:/Users/tlald/OneDrive/texed/my-texmf/bibtex/bib/main}   % name your BibTeX data base
\bibliography{../../../../my-texmf/bibtex/bib/main}   % name your BibTeX data base
%\bibliography{MWSFWSG}
%% Non-BibTeX users please use
%\begin{thebibliography}{}
%%
%% and use \bibitem to create references. Consult the Instructions
%% for authors for reference list style.
%%
%\bibitem{RefJ}
%% Format for Journal Reference
%Author, Article title, Journal, Volume, page numbers (year)
%% Format for books
%\bibitem{RefB}
%Author, Book title, page numbers. Publisher, place (year)
%% etc
%\end{thebibliography}

\end{document}